\documentclass[preprint,11pt,times]{elsarticle}

\usepackage[top=2.5cm,bottom=2.5cm,left=2.0cm,right=2.0cm]{geometry}
\usepackage{amssymb}
\usepackage{amsmath}
\newtheorem{proposition}{Proposition}
\newtheorem{proof}{Proof}
\usepackage{hyperref}
\usepackage{booktabs}
\usepackage{multirow}
\usepackage[dvipsnames]{xcolor}
\usepackage{adjustbox,caption}
\usepackage{array}
\usepackage{ragged2e}

\newcommand{\customciteauthor}[1]{%
  \IfFileExists{\jobname.bbl}{\citeauthor{#1}}{Author(s)}%
}
\newcommand{\koutofn}{\(k\)-out-of-\(n\)}
\DeclareMathOperator*{\argmax}{arg\,max}
\DeclareMathOperator*{\argmin}{arg\,min}

\usepackage{lineno}
\biboptions{sort&compress}

\begin{document}

\begin{frontmatter}

\title{The price of decentralization in managing engineering systems \\ through multi-agent reinforcement learning}

\author[label1]{Prateek Bhustali\texorpdfstring{\corref{cor1}}{*}}
\ead{p.bhustali@tudelft.nl}
\author[label1,label2]{Pablo G. Morato}
\author[label3]{Konstantinos G. Papakonstantinou}
\author[label1]{Charalampos P. Andriotis}
\affiliation[label1]{organization={Faculty of Architecture and the Built Environment, Delft University of Technology},
            city={Delft},
            postcode={2628 BL}, 
            country={The Netherlands}}
\affiliation[label2]{organization={Engineering Risk Analysis Group, Technical University of Munich},
            city={Munich},
            postcode={80333}, 
            country={Germany}}
\affiliation[label3]{organization={Department of Civil \& Environmental Engineering, The Pennsylvania State University},
            addressline={University Park}, 
            state={PA},
            postcode={16802}, 
            country={USA}}

\cortext[cor1]{Corresponding author.}

\begin{abstract}
Inspection and maintenance (I\&M) planning involves sequential decision making under uncertainties and incomplete information, and can be modeled as a partially observable Markov decision process (POMDP). While single-agent deep reinforcement learning provides approximate solutions to POMDPs, it does not scale well in multi-component systems. Scalability can be achieved through multi-agent deep reinforcement learning (MADRL), which decentralizes decision-making across multiple agents, locally controlling individual components.
However, this decentralization can induce cooperation pathologies that degrade the optimality of the learned policies. To examine these effects in I\&M planning, we introduce a set of deteriorating systems in which redundancy is varied systematically. These benchmark environments are designed such that computation of centralized (near-)optimal policies remains tractable, enabling direct comparison of solution methods. We implement and benchmark a broad set of MADRL algorithms spanning fully centralized and decentralized training paradigms, from value-factorization to actor-critic methods.
Our results show a clear effect of redundancy on coordination: MADRL algorithms achieve near-optimal performance in series-like settings, whereas increasing redundancy amplifies coordination challenges and can lead to optimality losses. Nonetheless, decentralized agents learn structured policies that consistently outperform optimized heuristic baselines, highlighting both the promise and current limitations of decentralized learning for scalable maintenance planning.
\end{abstract}

\begin{keyword}
\justifying
Multi-component deteriorating systems; 
Inspection and maintenance planning;
Multi-agent deep reinforcement learning; 
Decentralized partially observable Markov decision processes
\end{keyword}

\end{frontmatter}

\section{Introduction}

\begin{figure}[t] 
    \centering
    \includegraphics[width=\linewidth]{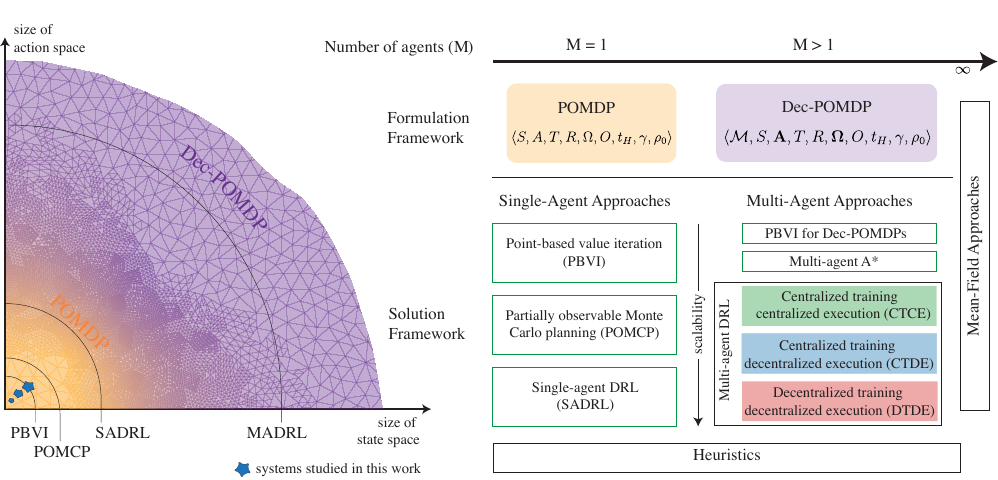}
    \caption{In multi-component systems, joint state, action, and observation spaces grow exponentially with the number of components, rendering single-agent approaches intractable. Dec-POMDPs address this curse of dimensionality by decentralizing control and are commonly solved using multi-agent DRL methods. 
    \textit{Left}: Scalability of solution approaches. The curved boundaries illustrate the approximate, relative scalability of POMDP and Dec-POMDP solution approaches as these spaces increase. The (\includegraphics[height=1.2ex]{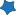}) indicates the approximate size of the systems studied in this work, where single-agent methods remain tractable and allow direct comparison with multi-agent approaches. 
    \textit{Right}: Agent-centric paradigms. Single-agent approaches are intrinsically centralized because a single agent possesses all information during training and execution. In contrast, multi-agent approaches, such as CTDE, relax decentralization during training by allowing agents to share information, but agents must learn to execute actions independently at execution/inference.}
    \label{fig:landscape}
\end{figure}

Large-scale infrastructure systems, such as electricity grids, road networks, and communication systems, are essential to societal function and their sustained reliability depends on effective long-term management through inspection and maintenance planning (I\&M). Partially observable Markov decision processes (POMDPs) provide a systematic formulation for I\&M planning for such engineering systems under performance, data, and degradation uncertainties. POMDPs address several challenges inherent to the problem, such as stochastically changing component properties, partial observability of damage states, and long planning horizons~\citep{madanat_optimal_1994,corotis_modeling_2005,scharpff_planning_2013,papakonstantinou_planning_2014,Papakonstantinou2018POMDPAM,arcieri_bridging_2023,hu_condition-based_2026}. 
In multi-component systems, exact solutions to POMDPs are generally computationally intractable, motivating a variety of approximate approaches to I\&M planning. These include (i) heuristics that parameterize the policy space using risk-based or condition-based rules to keep the search space tractable~\citep{bismut_optimal_2021,luque_risk-based_2019}, (ii) online planning methods that perform approximate belief-space lookahead using an environment simulator, such as partially observable Monte Carlo planning (POMCP) \citep{silver_monte-carlo_2010}, and (iii) offline planning methods that approximate optimal behavior over the belief space, including point-based value iteration methods (PBVI) \citep{spaan_perseus_2005,kurniawati_sarsop_2008} or deep reinforcement learning (DRL) \citep{mnih_playing_2013,schulman_proximal_2017,haarnoja_soft_2018,gallici_simplifying_2024}.

However, for multi-component systems, the joint state-, action-, and observation-spaces grow \textit{exponentially} with the number of components, rendering near-optimum POMDP solutions intractable for large-scale real-world I\&M planning problems. To alleviate this curse of dimensionality, decentralized POMDPs (Dec-POMDPs) extend the POMDP framework by designating an agent for each component or subsystem. This \textit{decentralization} effectively renders the problem as a cooperative multi-agent task. Similarly, this central idea is applied to solution methods, giving rise to multi-agent DRL (MADRL)~\citep{albrecht_multi-agent_2023,amato_first_2024}. Mean-field approaches, such as mean-field reinforcement learning~\citep{mean_field_RL}, extend this idea further to large agent populations by modeling interactions between individual agents and macroscopic population statistics.
Figure~\ref{fig:landscape} highlights the relative scalability of different solution approaches, from single-agent, point-based POMDP solvers to MADRL. While exact and heuristic search methods, such as PBVI-based approaches for Dec-POMDPs~\citep{macdermed_point_2013} and multi-agent A*~\citep{koops_approximate_2024}, extend planning and search-based techniques to the multi-agent setting, they remain limited to small problem instances due to the exponential growth of the joint action and belief spaces. In contrast, MADRL relaxes optimality guarantees in exchange for improved scalability. Following the success of MADRL, several works within I\&M planning have adopted this paradigm to address scalability \cite{andriotis_managing_2019,andriotis_deep_2021,leroy_imp-marl_2023,saifullah_multi-agent_2024,DO2024123144,LEE2023109512,Bhattacharya2025,Arcieri2025}.

While decentralization relaxations in reinforcement learning alleviate scalability challenges, they can also introduce pathologies related to multi-agent learning, such as shadowed equilibria, environment non-stationarity, Pareto-selection issues, and multi-agent credit assignment issues, which can impede the convergence to optimal policies~\citep{matignon_independent_2012,claus_dynamics_1998}. 
These learning pathologies give rise to what we term as the \emph{price of decentralization}: the loss in optimality incurred when decentralized control is used to achieve scalability in multi-agent decision-making problems.
As the underlying reward model and environment dynamics influence the strengths of these pathologies~\citep{claus_dynamics_1998}, their effect on the learned policies in I\&M planning problems remains to be understood. To systematically understand these implications, we contrast the three main MADRL paradigms: centralized training centralized execution (CTCE), centralized training decentralized execution (CTDE), and decentralized training decentralized execution (DTDE). 

Existing works benchmark the performance of MADRL algorithms~\citep{berlingerio_cooperative_2019,papoudakis_benchmarking_2021,leroy_imp-marl_2023}, \citep{leroy_imp-marl_2023} doing so in relatively large scales (up to 100 components). However, their evaluation relies on heuristic baselines that offer no guarantees of optimality. We further show that such heuristics can misrepresent algorithmic performance, as their effectiveness varies with the problem setting. This work builds on \citep{oliehoek_assessing_2025} by introducing new environments and a broader set of MADRL agents to examine multi-agent learning pathologies in I\&M planning. A detailed comparison with existing benchmarks and solution approaches is provided in Subsection~\ref{section:related work}.

To address these limitations in tracing the price of decentralization, we introduce benchmark environments based on the \koutofn:G system, focusing on a system with \(n=4\) components while varying the redundancy parameter \(k\) from 1 (parallel) to \(n\) (series)~\citep{kapur_reliability_2014}. These environments capture key characteristics of real-world deteriorating systems while remaining tractable for centralized algorithms. They include deterioration dynamics (modeled using upper-triangular matrices), redundancy, inspection and repair actions, partial observability of damage states, and heterogeneous components. In these environments, agents must jointly plan actions to minimize long-term inspection and maintenance costs over an infinite horizon. We adopt a discounted infinite-horizon formulation. This avoids the explicit time augmentation required to use finite-horizon variants with discounted infinite-horizon solvers. These formulations typically encode the timestep in the state, which multiplies the state space by the planning horizon and increases planning complexity. To verify consistency of our findings, we also study smaller systems, with \(n=2\) and \(n=3\) components, defined analogously as subsets of the \(n=4\) system. As shown in Figure~\ref{fig:landscape}, the small system sizes place the problem within the tractable regime of single-agent methods, allowing us to compute (near-)optimal solutions with SARSOP \cite{kurniawati_sarsop_2008}, a point-based POMDP solver with performance guarantees, and rigorously contrast the performance of DRL algorithms and heuristics. Using this framework, we assess the optimality of commonly used MADRL algorithms, including value-based and actor–critic, as well as on-policy and off-policy approaches.

We find that performance remains close to optimal in series and series-like systems; however, as redundancy increases (i.e., lower \(k\)) the performance of decentralized agents begins to deviate from the optimal, resulting in optimality losses. To further verify this trend, we conduct ablations that remove mobilization costs, i.e., fixed costs incurred when initiating an intervention,  and vary the failure penalties, i.e., costs incurred upon system failure, where we observe the same effect. As such, we provide insights into key limitations of current MADRL approaches in the I\&M context and provide a benchmark for diagnosing these challenges, informing the development of more effective algorithms. To complement the empirical findings, we analyze a simplified I\&M matrix game that isolates the coordination structure induced by redundancy. This analysis shows that methods based on value decomposition, which approximate joint action values as (monotonically) additive combinations of per-agent values, can be suboptimal in the presence of redundancy. Although derived in a single-step setting, it provides intuition for why such methods may struggle in systems with redundancies. Finally, despite the known pathologies of decentralized learning, coordinated behavior can still emerge in structured settings. When managing the 1-out-of-4 system, where optimal performance is more difficult to achieve, decentralized agents learn a periodic maintenance strategy, replacing components at regular intervals without access to global time. This suggests that decentralized policies can, in some cases, approximate centralized solutions in simple and structured ways.

The main contributions of this work can be summarized as follows:
\begin{itemize}
\item We systematically compare the three main multi-agent learning paradigms and show that while decentralization achieves near-optimal performance in series and series-like systems, increasing redundancy induces coordination challenges that can lead to optimality losses.
\item We introduce benchmark environments for I\&M planning with (near-)optimal baselines. The benchmarks are based on \koutofn{} systems with \(n=4\) components and varying \(k \in \{1, 2,3,4\}\) and capture key features of real-world systems, including deterioration, redundancy, inspection and repair actions, partial observability, and component heterogeneity.
\item We provide open-source benchmark environments, (near-)optimal baselines, and DRL implementations, enabling reproducible comparisons across the proposed benchmarks.
\end{itemize}

Our benchmark environments, algorithm implementations, baselines, and trained models for reproducing the results of this paper are publicly available at
\href{https://github.com/prateekbhustali/price-of-decentralization}{\texttt{github.com/prateekbhustali/price-of-decentralization}}.

\subsection{Related work}
\label{section:related work}
To devise inspection and maintenance planning policies, conventional approaches use policy search within predefined policy subspaces. Exhaustive search over the feasible policy space is infeasible even for small problems. These subspaces are defined by heuristics such as risk-based thresholds, periodic inspection, or condition-based maintenance~\cite{grall_condition-based_2002,luque_risk-based_2019,bismut_optimal_2021}. However, these heuristics are often shown to be outperformed by DRL methods and may exclude the optimal policy from the searched policy subspaces.

Decentralized approximations of centralized policies have shown promising results in benchmark problems~\citep{papoudakis_benchmarking_2021,yu_surprising_2022,rashid_qmix_2018}. On the one hand, many recent works in I\&M planning address the scalability challenge by articulating the problem as a Dec-POMDP and solving it with MADRL algorithms~\citep{andriotis_managing_2019,leroy_imp-marl_2023,saifullah_multi-agent_2024,molaioni2024dynamic,DO2024123144,LEE2023109512}. 
For example, \citep{andriotis_managing_2019} alleviates the exponential growth of the policy output by assuming conditional independence of actions given the current action-observation history of agents (or joint beliefs). On the other hand, works such as \citep{hu_condition-based_2026} introduce a component-wise POMDP formulation. It decomposes the system-level decision problem into per-component POMDPs while retaining system-level dependencies (e.g., mobilization costs). This decomposition enables the use of POMDP solution methods shown in Figure~\ref{fig:landscape}.

Prior comparisons of MADRL algorithms across the CTCE, CTDE, and DTDE paradigms for solving Dec-POMDPs can be found in~\citep{berlingerio_cooperative_2019}. They benchmarked MADRL algorithms on problems with relatively small state/action spaces and investigated policy gradient methods. \citep{lyu_centralized_2023} studies the role of centralized critics and their advantages over decentralized critics. They also demonstrate theoretically and empirically the shortcomings of training using centralized critics. More recently, \citep{papoudakis_benchmarking_2021} extensively benchmarked MADRL algorithms, both value-based and policy gradient-based, on a diverse set of environments. In contrast, this work benchmarks MADRL algorithms across CTCE, CTDE, and DTDE paradigms while comparing against (near-)optimal baselines. Moreover, we examine the effect of redundancy in multi-agent cooperation in the context of I\&M planning.

To examine the efficacy of MADRL algorithms specifically for I\&M planning, \citep{leroy_imp-marl_2023} benchmarks the performance of state-of-the-art CTDE and DTDE algorithms in large-scale systems with up to 100 components. While they address I\&M planning at this scale, their evaluation relies on heuristic baselines that offer no guarantees on optimality. Moreover, their study does not specifically examine the role of redundancy in shaping multi-agent learning performance.

Recent work by~\cite{oliehoek_assessing_2025} provides an initial study of centralized versus decentralized MADRL formulations for I\&M planning in \(k\)-out-of-\(n\) systems. We further build on this work by introducing new benchmark environments and contrasting a broader set of DRL agents to study multi-agent coordination challenges in I\&M planning. \cite{oliehoek_assessing_2025} also only focused on a finite-horizon \(k\)-out-of-\(5\) systems solved using off-policy DRL algorithms and compared against heuristic baselines. In contrast, we introduce here infinite-horizon \koutofn{} systems with \(n \in \{2,3,4\}\), establish (near-)optimal baselines, and evaluate both off-policy and on-policy algorithms. We further show that heuristic solutions need to be carefully interpreted, since they can misrepresent general algorithmic performance, as their effectiveness varies with the problem setting.

\section{Background: Formulation and solution frameworks}

In this section, we briefly outline the mathematical frameworks leveraged in this work and concretely connect these frameworks to the I\&M problem in Section \ref{sec:I&M environments}.

\subsection{Partially Observable Markov Decision Process (POMDP)}

POMDPs are a framework for modeling sequential decision-making in stochastic environments under partial observability. They are defined by the tuple $\langle S, A, T, R, \Omega, O, t_H, \gamma, \rho_0 \rangle$, where $S$ is the state space, $A$ is the action space, $T: S \times A \times S \to [0,1]$ is the transition model, $R: S \times A \to \mathbb{R}$ is the reward model, $\Omega$ is the observation space, $O: S \times A \times \Omega \to [0,1]$ is the observation model, $t_H \in \mathbb{N}_0 \cup\{\infty\}$ is the time horizon, $\gamma \in [0, 1]$ is the discount factor, and \(\rho_0\) is the initial distribution over states~\citep{kochenderfer_algorithms_2022,kaelbling_partially_1995}. The solution to a POMDP is an optimal policy $\pi^*$ that maximizes the expected sum of (discounted) rewards\footnote{In finite-horizon settings, $\gamma$ can be set to 1. However, sometimes a discount factor is also specified in the finite-horizon case to model the effect of inflation \citep{dewanto_examining_2022}. To avoid confusion, such effects can be subsumed into the reward model.}:
\begin{equation}
     \pi^* \in 
     \argmax_{\pi} J(\pi),
     \quad
    J(\pi) \coloneqq 
    \mathbb{E} \left[
    \sum_{t=0}^{t_H-1}\gamma^t R(s^t,a^t)
    \,\middle|\,
    \begin{array}{l}
    s^0 \sim \rho_0, \\
    a^t \sim \pi(\cdot\mid h^t),\\
    s^{t+1}\sim T(\cdot \mid s^t,a^t),\\
    o^{t+1}\sim O(\cdot\mid s^{t+1}, a^{t})
    \end{array}
    \right].
\label{eq:POMDP_optimization}  
\end{equation}
The policy maps the agent's action--observation history $h$ to the action(s)  $a \in A$. It can be deterministic, $a = \pi(h)$, or stochastic, $a \sim \pi(h)$. Although this initialization is often left implicit in POMDP formalisms, to make the history definition well-defined at \(t=0\), we introduce a null action
\(a^{-1}\coloneqq \phi \notin A\) and an initial observation \(o^0 \sim O^0(\cdot\mid s^0)\), so that \(h^0 \coloneqq \langle a^{-1},o^0 \rangle\). When we have access to transition and observation models, we can apply Bayes' rule to succinctly summarize $h$ into a probability measure called belief, $b$, representing uncertainty over the state space. This can effectively transform the POMDP into a \textit{belief-MDP}~\citep{astrom_optimal_1965}. We denote the initial belief by \(b_0 \coloneqq \rho_0\). For POMDPs with discrete states, the belief is a vector representing the probability over states, denoted by \(\vec b\). In this case, the belief about being in state $s^{\prime} \in S$ when we receive an observation $o$ after taking an action $a$ is:
\begin{equation}
    b^{a, o}(s') = \frac{O\left(o \mid s^{\prime}, a\right) \sum_{s \in S} T\left(s^{\prime} \mid s, a\right) b(s)}{\operatorname{Pr}(o \mid \vec b, a)},
    \label{eq:belief_update}
\end{equation}
where \(\operatorname{Pr}(o\mid\vec b, a)\) is a normalizing constant. Since beliefs are a \textit{sufficient} statistic summarizing the action-observation history, policies can map the belief space $B$ to actions, \(
\pi: B \to \operatorname{Pr}(A)\).

\subsection{Point-based POMDP solvers} \label{sec:POMDP solvers}

The principles of dynamic programming enable offline planning (also known as background planning) for MDPs when closed-form transition and reward models are available, $T: S \times A \times S \to [0, 1]$ and $R: S \times A \to \mathbb{R}$,  respectively~\citep{puterman2014markov}. When extended to POMDPs with known observation models, the problem can be reformulated as a belief MDP, where each belief is a probability distribution over system states. Although the belief space is continuous, its value function is convex and can be represented as the upper envelope of a finite set of linear functions, known as \(\alpha\)-vectors. Each \(\alpha\)-vector corresponds to a conditional plan and defines a linear function over the belief space. In principle, this representation allows \textit{exact value iteration}; however, it is computationally infeasible for large problems, due to the exponential growth in the number of \(\alpha\)-vectors ~\citep{shani_survey_2013}.

Point-based value iteration (PBVI), a class of approximate POMDP solvers, mitigates this curse of dimensionality using two key ideas. First, it represents the value function over a finite subset of belief space points. Second, it optimizes the value function approximation using point-based backups at those beliefs~\citep{shani_survey_2013}. Prominent PBVI solvers include Perseus~\citep{spaan_perseus_2005}, HSVI~\citep{hsvi}, and SARSOP~\citep{kurniawati_sarsop_2008}. Since, in general, a queried belief may not coincide with one of the beliefs backed up during offline planning, a look-ahead search can refine action selection by evaluating the one-step consequences at that belief.

In this work, we use SARSOP (short for successive approximations of the reachable space under optimal policies) to obtain near-optimal baselines. Unlike other solvers that approximate only a lower bound \( V^L(\vec b)\), SARSOP also maintains an upper bound on the value function~\citep{kurniawati_sarsop_2008}; when these bounds converge, they yield the optimal value \(V^L(\vec b) \leq V^{*}(\vec b)\). SARSOP returns a finite set of discrete \(\alpha\)-vectors \(\Gamma \coloneqq \{\vec \alpha_1, \vec \alpha_2, \ldots, \vec \alpha_{\zeta}\} \), each labeled with an action. The lower-bound approximation and the corresponding greedy policy, which selects the action associated with the maximizing vector, are given by:
\begin{equation}
V^{L}(\vec b) = \max_{\alpha \in \Gamma} \, \sum_{s \in S} \alpha(s) \, b (s),
\quad 
\text{and}
\quad
\pi^{\Gamma}_{\text{greedy}}(\vec b) 
= \text{action}\!\left(\argmax_{\alpha \in \Gamma} \ \sum_{s \in S} \alpha(s)\, b(s)\right).
\end{equation}
While \(V^L(\vec b)\) provides strong baselines for \(V^{*}(\vec b)\), it is computed from backups on a finite set of beliefs that aims to approximate the optimally reachable belief space \(\mathcal{R}^\ast(b_0)\). Therefore, \(\pi^{\Gamma}_{\text{greedy}}(\vec b)\) might not yield optimal actions for \(\vec b \notin \mathcal{R}^\ast(b_0)\) or when SARSOP has not converged. To mitigate these limitations, we improve the SARSOP policy with a one-step look-ahead search, which serves as a (near-)optimal baseline as follows (refer to \citep{kochenderfer_algorithms_2022} for further details):
\begin{equation}
\pi^{\Gamma}_{\text{look-ahead}}(\vec b) = \argmax_{a} \Bigg[ \sum_{s \in S} R(s,a) \, b(s) + \gamma \sum_{o\in\Omega} P(o \mid \vec b,a) \, V^{L}(\vec b^{a,o}) \Bigg],
\end{equation}
where \(P(o \mid \vec b,a) = \sum_{s \in S} b(s) \bigg [\sum_{s^{\prime} \in S} T(s^{\prime} \mid s,a) \, O(o \mid s^{\prime},a) \bigg ]\) is the probability of observing \(o \in \Omega\) after taking action \(a \in A\) 
when the current belief is \(\vec b\). Since the true state is unknown, we average over all possible current states \(s\) weighted by \(b(s)\). We then marginalize over all possible future transitions \(s^\prime\) in which \(o\) might be observed. For each \(s\), the
system transitions to the next state \(s^\prime\) with probability \(T(s^{\prime} \mid s,a)\). \(\vec b^{a,o}\) denotes the updated belief obtained via Equation~\eqref{eq:belief_update}.

\subsection{Decentralized Partially Observable Markov Decision Process (Dec-POMDP)}
\label{sec:background_dec_pomdp}

\begin{figure}[t]
    \centering
    \includegraphics[width=\linewidth]{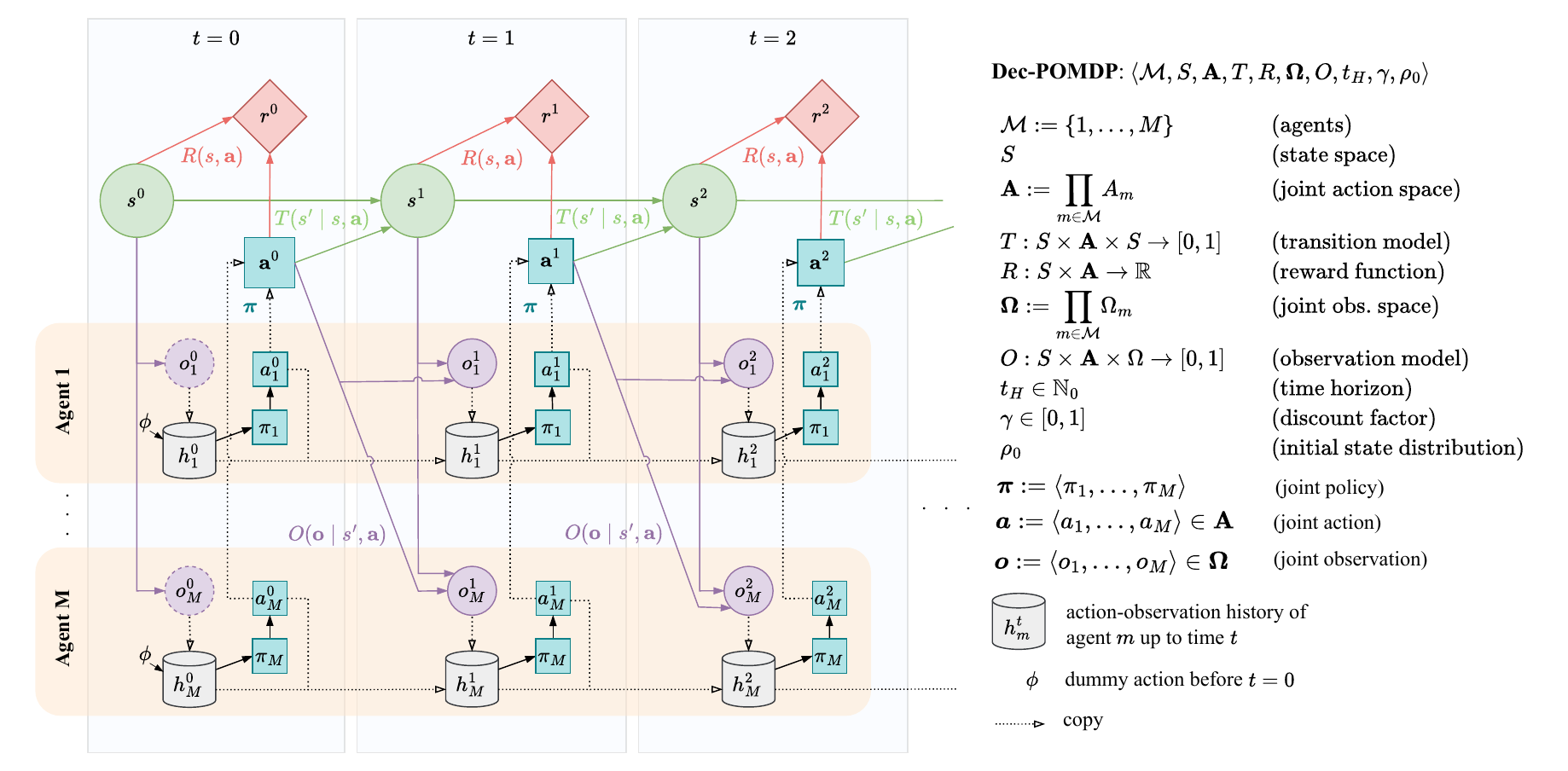}
    \caption{A Dec-POMDP unrolled as a dynamic decision network over three time steps. At each time step \(t\), the environment state \(s^t\) evolves according to the transition model \(s^{t+1} \sim T(\cdot \mid s^t, \mathbf{a}^t)\) under the joint action \(\mathbf{a}^t\), yielding a global reward $r^t$ and individual observations \(\mathbf{o}^{t+1} \sim O(\cdot \mid s^{t+1},\mathbf{a}^t)\) for each agent $m \in \mathcal{M}$. Each agent selects its action based on its local action–observation history $h_m^{t}$.}
    \label{fig:dec_pomdp}
\end{figure}

Dec-POMDPs extend the POMDP framework to multi-agent systems,  enabling scalability through decentralization~\citep{oliehoek_optimal_2008,oliehoek_concise_2016}. We illustrate the Dec-POMDP rolled out as a dynamic decision network over three time steps in Figure \ref{fig:dec_pomdp}. In this setting, agents act cooperatively, guided by a common reward signal to jointly maximize the expected return. It is defined by the tuple 
\(\langle 
\mathcal{M}, S, \mathbf{A}, T, R, \boldsymbol{\Omega}, O, t_H, \gamma, \rho_0
\rangle\), 
where 
\(\mathcal{M} \coloneqq \{ 1, \ldots, M \}\) is the set of agents, 
\(S\) is the state space, 
\(\textbf{A}\coloneqq \prod_{m \in \mathcal{M}} A_m\) is the joint action space, 
$T: S \times \textbf{A} \times S \to [0, 1]$ is the transition model, 
$R: S \times \textbf{A} \to \mathbb{R}$ is the reward model, 
$\boldsymbol{\Omega} \coloneqq \prod_{m \in \mathcal{M}} \Omega_m$ is the joint observation space, $O: S \times \textbf{A} \times \boldsymbol{\Omega} \to [0, 1]$ is the joint observation model, 
$t_H \in \mathbb{N}_0 \cup\{\infty\}$ is the time horizon,  
$\gamma \in [0, 1]$ is the discount factor, and \(\rho_0\) is the initial distribution over states.
To solve a Dec-POMDP, we seek an optimal \textit{joint} policy, $\boldsymbol{\pi}^*$, that maximizes the expected sum of discounted rewards:
\begin{equation}
\boldsymbol{\pi}^* \in \argmax_{\boldsymbol{\pi}} J(\boldsymbol{\pi}),  \quad
J(\boldsymbol{\pi}) \coloneqq
\argmax_{\boldsymbol{\pi}}
\mathbb{E} \left[
\sum_{t=0}^{t_H-1}\gamma^t R(s^t,\mathbf{a}^t)
\;\middle|\;
\begin{array}{l}
s^0 \sim \rho_0,\\
a_m^t \sim \pi_m(\cdot \mid h_m^t),\ \forall m\in\mathcal{M},\\
s^{t+1} \sim T(\cdot \mid s^t,\mathbf{a}^t),\\
\mathbf{o}^{t+1} \sim O(\cdot \mid s^{t+1},\mathbf{a}^t)
\end{array}
\right],
\label{eq:Dec-POMDP optimization}
\end{equation}
where the joint policy $\boldsymbol{\pi} \coloneqq \langle \pi_1, \ldots, \pi_M \rangle$ is a set of decentralized agent policies, and $\mathbf{a}^t \coloneqq \langle a_1^t, \ldots, a_M^t\rangle \in \textbf{A}$ is the joint action. The agent policy $m$ maps its individual action-observation history $h_m \in H_m$ to actions $a_m \in A_m$. It can be deterministic $a_m = \pi_m(h_m)$ or stochastic $a_m \sim \pi_m(h_m)$. Analogous to POMDPs, we initialize each agent's history at \(t=0\) with a null action, \(a_m^{-1} \coloneqq \phi \notin A_m\), and an initial observation, \(\mathbf{o}^0 \sim O^0(\cdot \mid s^0)\),
so that \(h_m^0 \coloneqq \langle a_m^{-1}, o_m^0 \rangle\).

\subsubsection{Multi-agent POMDP (MPOMDP)}

An MPOMDP is a special instance of a Dec-POMDP in which all agents always share their observations and past actions, allowing a common internal representation of joint action-observation histories $\mathbf{h} \in \mathbf{H}$ or joint beliefs $\mathbf{b} \coloneqq \langle \vec b_1, \ldots, \vec b_M \rangle \in \mathbf{B}$ between agents~\citep{oliehoek_concise_2016}. However, agents learn decentralized policies, mapping the shared representation to individual actions $\pi_m: \mathbf{H} \to A_m$. Since policies are factored agent-wise, this mitigates the exponential growth of the action space, but its scalability is limited by the dimensionality of $\mathbf{H}$.
\section{Multi-agent Deep Reinforcement Learning (MADRL)}

Dec-POMDPs are hard to solve optimally as they require reasoning over the joint policy space that grows exponentially with the number of agents~\citep{oliehoek_concise_2016}.
To alleviate the curse of dimensionality, MADRL extends single-agent DRL by decentralizing decision-making across multiple agents. This extension is orthogonal to the native algorithmic classification in single-agent RL, such as model-based/model-free, value-based/actor-critic-based, or on/off-policy, and thus, many algorithmic variants that combine those dimensions exist. In this work, we focus on model-free approaches, including actor–critic and value-based methods, and cover both on-policy and off-policy algorithms.

\subsection{Paradigms and algorithms}

MADRL approaches aim to achieve scalability through decentralization of global information and/or joint actions. To alleviate the coordination difficulties arising from information decentralization, some algorithms assume access to centralized information, such as joint states/observations, joint actions, only during training but not during execution/deployment. A classification of algorithms based on information available during training and execution gives rise to three paradigms discussed below. An overview of algorithms corresponding to each paradigm studied in this work is shown in Figure~\ref{fig:MARL_algorithms}, and a summary of corresponding formulation frameworks is presented in Table~\ref{tab:architectures}.

\begin{figure}
    \centering
    \includegraphics[width=1\linewidth]{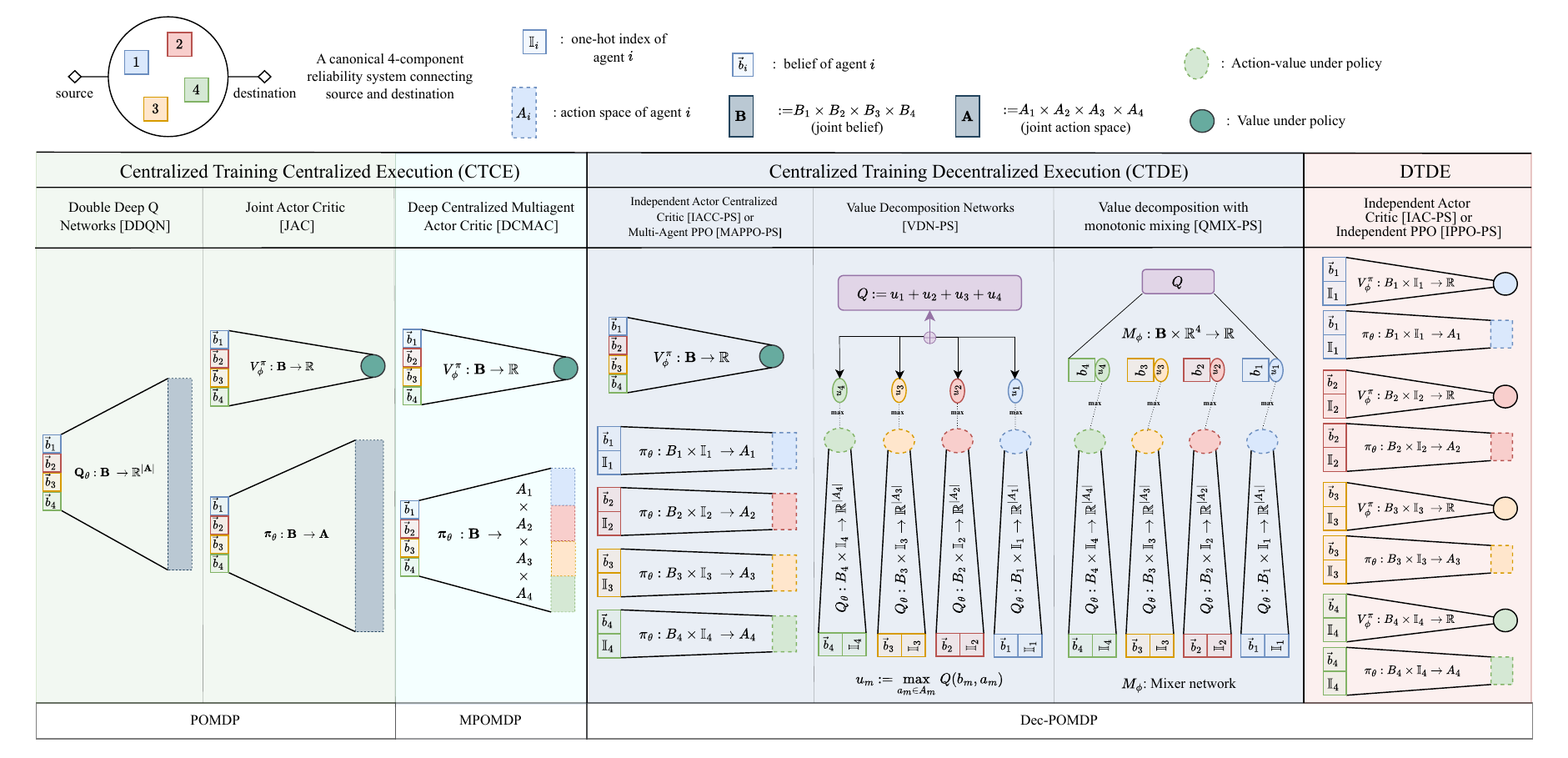}
    \caption{An overview of MADRL algorithms for managing a (four-component) reliability system. Algorithms are classified by training–execution paradigms: centralized training centralized execution (CTCE), centralized training decentralized execution (CTDE), and decentralized training decentralized execution (DTDE), reflecting increasing decentralization and information constraints. Bottom labels indicate the underlying formulation framework (POMDP, MPOMDP, Dec-POMDP). Suffix PS denotes parameter sharing across agents.}
    \label{fig:MARL_algorithms}
\end{figure}

\subsubsection{Centralized Training Centralized Execution (CTCE)}

In this paradigm, agents assume access to centralized information during training and execution. A single, centralized agent learns a joint policy $\boldsymbol{\pi}$, mapping the joint action-observation history (or joint beliefs) to the joint/factorized action space, $\boldsymbol{\pi}: \mathbf{H} \to \mathbf{A}$. It becomes "multi-agent" only when the action space is factorized while the observations remain centralized (in an MPOMDP fashion).
The following algorithms convert the Dec-POMDP into a POMDP, and therefore do not scale.

\begin{itemize}
    \item Joint double DQN (DDQN), a single-agent value-based DRL algorithm as introduced in~\citep{van_hasselt_deep_2015}. It maps the joint action-observation history (or global beliefs) to the action-values of the discrete joint action space\footnote{In practice, we typically learn an equivalent mapping $\mathbf{Q}_{\theta}: \mathbf{H} \to \mathbb{R}^{\lvert \mathbf{A}\lvert}$ (as shown in Figure \ref{fig:MARL_algorithms}).}, $Q_{\theta}: \mathbf{H} \times \mathbf{A} \to \mathbb{R}$.

    \item Joint actor-critic (JAC), a single-agent actor-critic algorithm where the actor learns a stochastic policy $\boldsymbol{\pi}(\mathbf{a} \mid \mathbf{h}; \theta)$ mapping the joint action-observation history (or beliefs) to joint actions, and a centralized critic $V^{\boldsymbol{\pi}}(\mathbf{h}; \phi)$ learns the value function under that joint policy. The per-sample policy gradient $g_{\theta}$ for this algorithm is proportional to:
    \begin{equation}
        \label{eq:PG JAC}
        g_{\theta} \propto \nabla_{\theta} \mathrm{log}\, \pi(\mathbf{a} \mid \mathbf{h}; \theta).
    \end{equation} 
We deliberately write $\pi(\mathbf{a} \mid \mathbf{h}; \theta)$, instead of the usual $\pi_{\theta}(\mathbf{a} \mid \mathbf{h})$, to highlight the conditional dependence on parameters $\theta$, which will become important in the following algorithms.
\end{itemize}

The following algorithm converts the Dec-POMDP into an MPOMDP by relaxing the assumption of joint action spaces and exploiting the factored action space to improve scalability.
\begin{itemize}
    \item Deep centralized multi-agent actor-critic (DCMAC) learns a single parameter-shared actor that outputs a factorized joint policy, \(\boldsymbol{\pi}(\mathbf{a} \mid \mathbf{h}; \theta) = \prod_{m=1}^{M} \pi(a_m \mid \mathbf{h}; \theta)\), mapping the joint action-observation history \(\mathbf{h}\) to per-agent action distributions. Following Equation \eqref{eq:PG JAC}, the per-sample policy gradient for DCMAC algorithm is summarized as follows~\citep{sukthankar_cooperative_2017,andriotis_managing_2019}:
        \begin{equation}
        \label{eq:PG DCMAC}
        g_{\theta} \propto \nabla_{\theta} \Bigg[\sum_{m=1}^M \mathrm{log}\, \pi(a_m \mid \mathbf{h}; \theta) \Bigg].
    \end{equation}
The underlying assumption is that the actions are conditionally independent given $\mathbf{h}$ and shared parameters \(\theta\). Here, parameter sharing implies the same neural network parameters \(\theta\) are used for all agents, as illustrated in Figure~\ref{fig:MARL_algorithms}.
Note that it is also possible to parameterize actors independently, as in deep decentralized multi-agent actor-critic (DDMAC)~\citep{andriotis_deep_2021}. This variant learns \(M\) separate actor policies \(\{\pi_m(a_m \mid \mathbf{h};\theta_m)\}_{m=1}^{M}\) that still condition on the joint action-observation history \(\mathbf{h}\), but use agent-specific parameters \(\{\theta_m\}\) instead of a shared \(\theta\).
\end{itemize}

\subsubsection{Centralized Training Decentralized Execution (CTDE)} 

In this paradigm, algorithms learn to coordinate during training by being conditioned on centralized information, such as joint states, joint observations, joint actions, etc. However, during execution, each agent selects actions using only its local information.

\begin{itemize}
    \item Independent actor centralized critic (IACC), a multi-agent actor-critic algorithm where agents learn a stochastic policy $\pi_m(a_m \mid h_m; \theta_m)$ guided by a global/centralized critic $V^{\boldsymbol{\pi}}(\mathbf{h}; \phi)$ during training~\citep{lyu_centralized_2023}. Conceptually, IACC can be seen as the CTDE analog of DDMAC, and \citep{saifullah_multi-agent_2024} refers to IACC as DDMAC-CTDE. Recall that in DDMAC, each actor conditions on the joint history \(\mathbf{h}\). This joint-history input requires a neural network whose input dimension grows with the number of components, which becomes prohibitive at scale. IACC/DDMAC-CTDE alleviates this by conditioning each actor only on its local history \(h_m\). To further improve parameter efficiency and implicitly aid coordination, IACC can share policy parameters $\theta$ among actors, but use a one-hot encoded agent index $\mathbb{I}_m$ to enable the agents to learn distinct polices, $\pi_m = \pi(a_m \mid h_m, \mathbb{I}_m; \theta)$\footnote{When agents have distinct number of actions, the one-hot vectors can be padded with zeros, and invalid action probabilities are set to 0.~\citep{papoudakis_benchmarking_2021}}. We call this variant IACC-PS.
    \item Multi-agent proximal policy optimization (MAPPO) is similar to IACC-PS in that it uses decentralized stochastic policies trained with a centralized value function. Each agent learns a policy $\pi_m = \pi(a_m \mid h_m; \theta_m)$ via PPO updates using the clipped surrogate objective~\citep{yu_surprising_2022}:
    \begin{equation}
    \mathcal{L}^{\text{CLIP}}(\theta_m)
    =
    \mathbb{E}_{\tau \sim \boldsymbol{\pi}^{\text{old}}}\!\left[
    \min\!\left(
    r_m^t(\theta_m)\,\mathcal{A}^t,\ 
    \text{clip}(r_m^t(\theta_m), 1-\epsilon, 1+\epsilon)\,\mathcal{A}^t
    \right)
    \right],
    \label{eq:MAPPO_objective}
    \end{equation}

    where \(\tau=(s^0,\mathbf{a}^0,\mathbf{o}^1,s^1,\ldots)\) is a trajectory generated by the sampling process in Equation \eqref{eq:Dec-POMDP optimization} under the behavior policy \(\boldsymbol{\pi}^{\text{old}}\). \(\pi_m^{\text{old}}\) denotes the behavior policy used to collect trajectories, and  \(r^t(\theta_m) \coloneqq \frac{\pi_m(a^t_m \mid h^t_m; \theta_m)}{\pi^{\text{old}}_m(a^t_m \mid h^t_m; \theta_m)}\) is the importance weight that measures how the updated policy changes the probability of the sampled action relative to \(\pi_m^{\text{old}}\). The term \(\mathcal{A}^t\) is the generalized advantage estimate~\citep{schulman_proximal_2017}, computed using the centralized value function. Clipping stabilizes updates by preventing large policy shifts. Similar to IACC-PS, the parameter-sharing variant, MAPPO-PS, uses a single network conditional on the agent index to learn distinct behaviors for each agent.
        
    \item Value decomposition networks (VDN) is a multi-agent value-based algorithm that represents the joint action-value $Q(\mathbf{h}, \mathbf{a})$, defined over the joint action–observation history $\mathbf{h}$ and joint action $\mathbf{a}$, as a sum of individual agent utilities $Q_m(h_m, a_m; \theta_m)$. The joint action-value is approximated as a linear combination of the agents' utilities~\citep{sunehag_value-decomposition_2017}:
    \begin{equation}
        Q(\mathbf{h}, \mathbf{a}) \approx \sum_{m=1}^M Q_m(h_m, a_m; \theta_m).
    \end{equation}
    VDN assumes that there exists a decomposition such that the joint greedy action can be recovered by independently maximizing each agent’s utility function, i.e., the individual-global-max (IGM) property holds:
    \begin{equation}
    \argmax_{\mathbf{a}\in \textbf{A}} Q(\mathbf{h}, \mathbf{a})
    =
    \left \langle
    \argmax_{a_1} Q_1(h_1, a_1),
    \dots,
    \argmax_{a_M} Q_M(h_M, a_M)
    \right \rangle.
    \end{equation}
    The parameter-sharing variant, VDN-PS, uses a single network shared across agents and conditioned on the agent index $\mathbb{I}_m$ to approximate individual utilities, i.e.,
    $Q_m(h_m, a_m) = Q(h_m, a_m \mid \mathbb{I}_m; \theta)$.

    \item Value decomposition with monotonic mixing (QMIX) extends VDN by using monotonic mixing of agents' utilities, enabling it to represent a superset of centralized action-value function compared to VDN~\citep{rashid_qmix_2018,albrecht_multi-agent_2023}. QMIX uses the global state $\textbf{s}$ (or global beliefs ($\textbf{b}$)) along with the utilities to compute $Q(\mathbf{h}, \mathbf{a})$. The parameter-sharing variant, QMIX-PS, is similar to VDN-PS.
\end{itemize}

\subsubsection{Decentralized Training Decentralized Execution (DTDE)}

\begin{itemize}
    \item Independent actor-critic (IAC) is a multi-agent actor-critic algorithm where agents learn a stochastic policy $\pi_m(a_m \mid h_m; \theta_m)$ guided by a local/decentralized critic $V^{\pi_m}(h_m; \phi_m)$ during training. In the parameter sharing variant (IAC-PS), the policies are modeled identically to IACC-PS, and the critics share parameters ($\phi$) as $V^{\pi_m} = V^{\pi}(h_m \mid \mathbb{I}_m; \phi)$ to represent the individual value functions.

    \item Independent proximal policy optimization (IPPO) is similar to IAC-PS in that each agent learns its own decentralized policy using local observations and rewards, without centralized training. Each agent independently optimizes a stochastic policy $\pi_m = \pi(a_m \mid h_m; \theta_m)$ using the clipped surrogate objective as defined in Equation~\eqref{eq:MAPPO_objective} for MAPPO. However, unlike MAPPO, the advantage estimate for agent \(m\), \(\mathcal{A}^t_m\), is computed using its own local value function and does not use a centralized value function~\citep{dewitt2020independentlearningneedstarcraft}. Like IAC-PS, we can expedite learning by sharing parameters across actors and critics. We call this variant IPPO-PS.
\end{itemize}

\begin{table}[t]
\centering
\resizebox{\linewidth}{!}{
\begin{tabular}{@{}cccccccccc@{}}
\toprule
\textbf{Paradigm} &
\textbf{Formulation} &
\multicolumn{3}{c}{\textbf{Algorithm}} &
\multicolumn{2}{c}{\textbf{Actor (policy-based only)}} &
\multicolumn{2}{c}{\textbf{Critic / Value Function}} \\
 &
\textbf{Framework} &
Name & Type & On/Off-policy &
Input & Output &
Input & Output \\
\midrule

\multirow{3}{*}{CTCE}
& \multirow{2}{*}{\begin{tabular}[c]{@{}c@{}}POMDP\\ (SADRL)\end{tabular}}
& DDQN & V & Off-policy
& - & -
& Centralized & Centralized \\
& & JAC & A+C & Off-policy
& Centralized & Centralized
& Centralized & Centralized \\
\cmidrule(lr){2-5}
& MPOMDP & DCMAC & A+C & Off-policy
& Centralized & Local
& Centralized & Centralized \\

\midrule

\multirow{4}{*}{CTDE}
& \multirow{4}{*}{Dec-POMDP}
& IACC-PS & A+C & Off-policy
& Local & Local
& Centralized & Centralized \\
& & MAPPO-PS & A+C & On-policy
& Local & Local
& Centralized & Centralized \\
& & VDN-PS & V & Off-policy
& - & -
& Local & Local \\
& & QMIX-PS & V & Off-policy
& - & -
& Local & Local \\

\midrule

\multirow{2}{*}{DTDE}
& \multirow{2}{*}{Dec-POMDP}
& IAC-PS & A+C & Off-policy
& Local & Local
& Local & Local \\
& & IPPO-PS & A+C & On-policy
& Local & Local
& Local & Local \\

\bottomrule
\addlinespace[0.2em]
\end{tabular}
}
\parbox{\linewidth}{\footnotesize
\textbf{PS}: parameter sharing; \textbf{A+C}: actor--critic; \textbf{V}: value-based.
}
\caption{Summary of single- and multi-agent DRL algorithms shown in Figure~\ref{fig:MARL_algorithms}. The \emph{Actor} columns describe the policy, when present, with \emph{Input} indicating the information available to the policy and \emph{Output} indicating whether actions are produced jointly, \(\mathbf{a}\), or per agent, \(a_m\). The \emph{Critic / Value Function} columns analogously report the information available to the value estimator and whether it outputs a single centralized value (e.g., \(Q(\mathbf{h},\mathbf{a})\), \(V(\mathbf{h})\)) or per-agent values (e.g., \(Q_m(h_m,a_m)\)). \emph{Centralized} denotes access to global state or joint observations and actions, whereas \emph{Local} denotes access only to an agent’s own observation or action–observation history.}
\label{tab:architectures}
\end{table}

\subsection{Multi-agent learning pathologies}
On the one hand, MADRL enables scalability through decentralization. On the other hand, this decentralization has a price, as multi-agent cooperation pathologies may impede the learning of optimal policies. Key known pathologies are listed below ~\citep{matignon_independent_2012,claus_dynamics_1998}:
\begin{itemize}
    \item Non-stationarity problem: From the perspective of a decentralized agent, even a stationary environment appears non-stationary due to the actions of other agents in the environment. Additionally, since agents explore independently, one agent’s exploratory actions can perturb the others’ learning signals, further encumbering learning.
    \item Pareto-selection problem: When multiple Pareto-optimal policies exist, agents must pick actions according to the same optimal policy because a mix of Pareto-optimal policies may not be optimal. For example, consider two optimal joint policies: $\mathbf{\pi_1^*} = \langle\textrm{left}, \textrm{left} \rangle$ and $\mathbf{\pi_2^*} = \langle\textrm{right}, \textrm{right}\rangle$. If agents select these policies independently, their actions can mismatch, producing $\mathbf{\pi} = \langle\textrm{left}, \textrm{right}\rangle$, which may not be optimal.
    \item Shadowed equilibria: An optimal equilibrium is said to be shadowed when a unilateral deviation from it has a lower reward/high penalties than from a sub-optimal equilibrium. Due to the dependence on the actions of other agents, the optimal equilibrium becomes unattractive in favor of a safe sub-optimal equilibrium~\citep{fulda_predicting_2007}. The reward 11 in the Climb Game in Figure \ref{fig:climb-game} is an example of a shadowed equilibrium.
\end{itemize}

\begin{figure}[t]
\begin{minipage}[c]{0.3\textwidth}
    \begingroup
\setlength{\tabcolsep}{8pt}
\renewcommand{\arraystretch}{1.2}
\begin{tabular}{@{}c c c c c@{}}
  &  & \multicolumn{3}{c}{Agent 1} \\
  & \multicolumn{1}{c|}{} & $\mathfrak{a}^1$ & $\mathfrak{a}^2$ & $\mathfrak{a}^3$ \\
  \cline{2-5}
  \multirow{3}{*}{\rotatebox{90}{\makebox[3em][c]{Agent 2}}}
    & \multicolumn{1}{c|}{$\mathfrak{a}^1$} & $11^{*}$ & $-30$ & $0$ \\
  & \multicolumn{1}{c|}{$\mathfrak{a}^2$} & $-30$ & $7$ & $6$ \\
  & \multicolumn{1}{c|}{$\mathfrak{a}^3$} & $0$ & $0$ & $5$ \\
\end{tabular}
\endgroup
\label{tab:climb_game}

\end{minipage}
\hfill
\begin{minipage}[c]{0.7\textwidth}
    \centering
    \includegraphics[width=\linewidth]{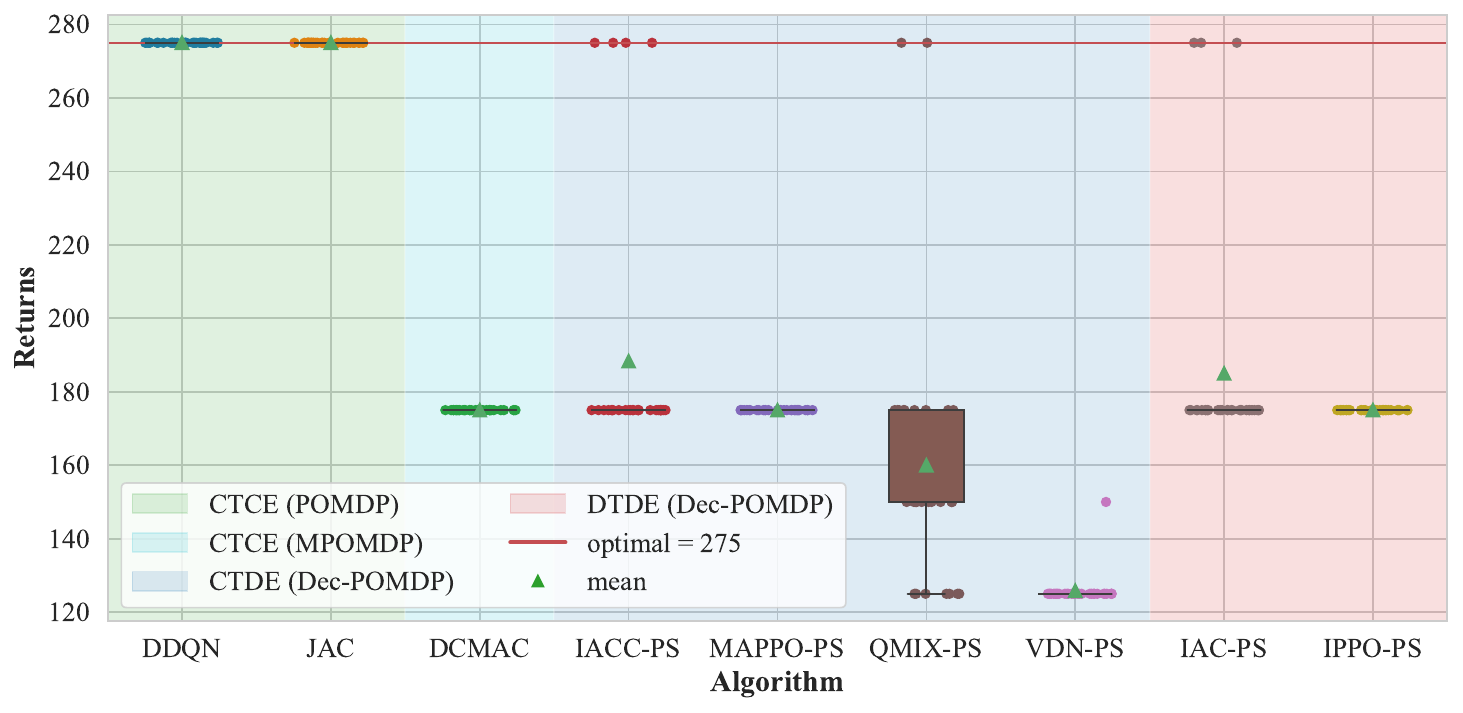}
\end{minipage}
    \caption{Left: Payoff matrix for the Climb Game with two agents, each choosing from three actions. The joint action \((\mathfrak{a}^1, \mathfrak{a}^1)\) yields the optimal reward of 11~\citep{claus_dynamics_1998}.  Right: Best performance of MADRL algorithms aggregated over 30 training instances in the Climb Game played repeatedly for 25 time steps~\citep{papoudakis_benchmarking_2021}. The optimal return is $11\times25 = 275$.}
    \label{fig:climb-game}
\end{figure}

\paragraph{Climb Game} A canonical example that illustrates the shadowed equilibria pathology is the Climb Game introduced by~\cite{claus_dynamics_1998}. It involves two agents, each with three actions, receiving rewards based solely on their joint action, as shown in the payoff matrix in Figure~\ref{fig:climb-game}. The agents do not receive any observations and must learn to disentangle the rewards through repeated interaction. The optimal reward (11) is shadowed by penalties (-30) when an agent unilaterally deviates from the optimum.
\section{Multi-agent I\&M environments}
\label{sec:I&M environments}

\subsection{Problem statement}
As introduced in Section~\ref{sec:background_dec_pomdp}, the Dec-POMDP framework formalizes decentralized sequential decision-making under uncertainty. In the context of I\&M planning, it provides a structured way to model multi-component or multi-subsystem engineering systems, where decentralized agents must act based on local observations and coordinate to manage deterioration and minimize long-term inspection/maintenance costs. It encapsulates a deteriorating engineering system where:
 \begin{itemize}
     \item $\mathcal{M} \coloneqq \{1, \ldots, M\}$ is the set of agents acting on components (or subsystems);
     \item $\mathbf{S}\coloneqq\times_{m \in \mathcal{M}} S_m \cup S_g$  is the discrete/continuous joint state space. It is often an aggregation of deterioration properties $S_m$  of the components, such as damage state, deterioration rates, etc., and global features $S_g$, such as time, resource availability, etc.;
     \item $\mathbf{A} \coloneqq \times_{m \in \mathcal{M}} A_m$ denotes the joint action space, with $A_m$ representing the set of component-level actions available to an agent $m$ (e.g., do-nothing, inspection, repair, replacement, or decommissioning);
     \item \(T: \mathbf{S} \times \mathbf{A} \times \mathbf{S} \to [0,1]\) denotes the Markovian transition model summarizing the dynamics of system deterioration and the effect of the intervention actions~\citep{van_noortwijk_survey_2009,molaioni2024dynamic}. If components deteriorate independently, $T(\mathbf{s}, \mathbf{a}, \mathbf{s'})$ can simply be decomposed component-wise, and when deterioration is correlated or dependent, it can be decomposed into factorized representations by reconstructing the underlying dynamic Bayesian network~\citep{morato_inference_2023};
     \item \(C: \mathbf{S} \times \mathbf{A} \to \mathbb{R}\) denotes the cost model summarizing the costs associated with joint intervention actions, risk, revenue, system failure, etc;
     \item \(\boldsymbol{\Omega}\coloneqq\times_{m \in \mathcal{M}} \Omega_m\) denotes the set of joint observations representing the signals we obtain from inspections and monitoring systems about the joint state of the system. It could be as simple as indicating whether the system is functional. Similar to system states, it can also be decomposed component-wise when observations are independent;
     \item \(O: \boldsymbol{\Omega} \times \mathbf{A} \times \mathbf{S} \to [0,1]\) is the observation model summarizing the accuracy of observations under each intervention action;
     \item $t_H \in \mathbb{N}_0 \cup\{\infty\}$ is the (in)finite horizon over which the system must be maintained;
     \item $\gamma \in [0, 1]$ specifies the trade-off between prioritizing current vs future costs;
     \item $\rho_0$ is the initial distribution over states.
 \end{itemize}
The goal is to find an optimal joint policy that minimizes the expected sum of costs over the given horizon. We rewrite Equation (\ref{eq:Dec-POMDP optimization}) as a minimization problem as follows:
\begin{equation}
    \boldsymbol{\pi}^* \in \argmin_{\boldsymbol{\pi}} J(\boldsymbol{\pi})
    =
    \argmin_{\boldsymbol{\pi}}
    \mathbb{E}\left[
    \sum_{t=0}^{t_H-1}\gamma^t C(\boldsymbol{s}^t,\mathbf{a}^t)
    \;\middle|\;
    \begin{array}{l}
    \boldsymbol{s}^0 \sim \rho_0,\\
    a_m^t \sim \pi_m(\cdot \mid h_m^t),\ \forall m\in\mathcal{M},\\
    \boldsymbol{s}^{t+1} \sim T(\cdot \mid \boldsymbol{s}^t,\mathbf{a}^t),\\
    \mathbf{o}^{t+1} \sim O(\cdot \mid \boldsymbol{s}^{t+1},\mathbf{a}^t)
    \end{array}
    \right].
   \label{eq:I&M optimization}
\end{equation}

\subsection{k-out-of-n infinite-horizon environments}
To study the effect of redundancy on the price of decentralization in multi-agent decision-making, we require environments with realistic system dynamics and cost models. In I\&M planning, cost models reflect long-term costs and delayed outcomes, which influence learning and the emergence of multi-agent pathologies. The \koutofn{} system allows us to model the core features of deterioration and maintenance, and vary redundancy in a controlled and systematic manner.

A \koutofn:G system, a common archetype of deteriorating engineering systems, operates as long as at least \(k\) out of its \(n\) components remain functional. This system includes two special cases: \(k=n\), which corresponds to a series system where all components must function for the system to operate, and \(k=1\), representing a parallel system where the operation of any single component is sufficient for the system to function \citep{kapur_reliability_2014, kuo_optimal_2003}. 
We focus on a system with \(n = 4\) components, where each component is managed by a dedicated agent, resulting in \( M = n \). 
Varying the redundancy parameter \(k\) from 1 (parallel) to \(n\) (series) yields the family of \koutofn{} environments studied in this work. 
The \(n=4\) system serves as the base case, as it is the largest configuration supported by SARSOP, to obtain (near-)optimal baselines, while still capturing the essential features of deterioration and maintenance. 
Systems with \(n \in \{2,3\}\), defined analogously as subsets of the base case, are also studied and reported in~\ref{appendix:ablation_study} alongside ablation studies to verify consistency across different system sizes and reward specifications. 
In the following, we define all component-level parameters (transition models, costs, and observation models) for the \(n=4\) system; the smaller systems use consistent subsets of these definitions.

\paragraph{State and Action spaces}  The state space of each component $S_m$ represents its possible damage states and is defined as 
$S_m \coloneqq \{\mathtt{s^1} = \text{no-damage}, \mathtt{s^2} = \text{major-damage}, \mathtt{s^3} = \text{failure}\}.$ 
Similarly, the actions available for each component are given by $A_m \coloneqq \{\mathtt{a^1} = \text{do-nothing}, \mathtt{a^2} = \text{repair}, \mathtt{a^3} = \text{inspect}\}$. 

\paragraph{Transition models} Components have unique, independent, and stationary deterioration models \newline 
\(T(s_m' \mid s_m, \textrm{do-nothing/inspect}) = T_m^\text{do-nothing/inspect}\). The synthetic transition models described below reflect the key features of a deterioration model: they are upper-triangular (indicating that a component's state cannot improve without repair) and have unit probability at the terminal state (signifying that failure is an absorbing state):
\[
\begin{array}{cccc}
\overset{T_1^\text{do-nothing/inspect}}{\begin{bmatrix}
0.82 & 0.18 & 0.0 \\
0.0 & 0.87 & 0.13 \\
0.0 & 0.0 & 1.0
\end{bmatrix}} & 
\overset{T_2^\text{do-nothing/inspect}}{\begin{bmatrix}
0.72 & 0.28 & 0.0 \\
0.0 & 0.78 & 0.22 \\
0.0 & 0.0 & 1.0
\end{bmatrix}} & 
\overset{T_3^\text{do-nothing/inspect}}{\begin{bmatrix}
0.79 & 0.21 & 0.0 \\
0.0 & 0.85 & 0.15 \\
0.0 & 0.0 & 1.0
\end{bmatrix}} & 
\overset{T_4^\text{do-nothing/inspect}}{\begin{bmatrix}
0.72 & 0.28 & 0.0 \\
0.0 & 0.78 & 0.22 \\
0.0 & 0.0 & 1.0
\end{bmatrix}}.
\end{array}
\]
Components can be restored to the \(\textrm{no-damage}\) state by the repair action, 
which succeeds with probabilities \newline \(\{0.95, 0.9, 0.98, 1\}\) for the four components, respectively. The transition model for the repair action for each component \(T_m^r(s_m' \mid s_m, \textrm{repair})\) is defined as:
\[
\begin{array}{cccc}
\overset{T_1^\text{repair}}{\begin{bmatrix}
1 & 0 & 0 \\
0.95 & 0.05 & 0 \\
0.95 & 0 & 0.05
\end{bmatrix}} & 
\overset{T_2^\text{repair}}{\begin{bmatrix}
1 & 0 & 0 \\
0.9 & 0.1 & 0 \\
0.9 & 0 & 0.1
\end{bmatrix}} & 
\overset{T_3^\text{repair}}{\begin{bmatrix}
1 & 0 & 0 \\
0.98 & 0.02 & 0 \\
0.98 & 0 & 0.02
\end{bmatrix}} & 
\overset{T_4^\text{repair}}{\begin{bmatrix}
1 & 0 & 0 \\
1 & 0 & 0 \\
1 & 0 & 0
\end{bmatrix}}.
\end{array}
\]
The transition model for each component can be summarized as follows:
\begin{equation}
T_m \coloneqq 
\begin{cases} 
T_m^\text{do-nothing/inspect}, & \text{if } a = \text{do-nothing or inspect}, \\
T_m^\text{repair} \times T_m^\text{do-nothing/inspect}, & \text{if } a = \text{repair}.
\end{cases}
\end{equation}

\paragraph{Cost models} The cost model consists of two parts: component-level costs (repair and inspection), and system-level costs (risk and mobilization). Repair and inspection costs are unique for each component and are given by: $\{50, 30, 80, 90\}$ and $\{ 5, 3, 8, 4\}$, respectively. 
In the \(n=4\) system, taking actions incurs a mobilization cost \(c_{{\mathrm{{mob}}}} = 4\), which incentivizes independent agents to coordinate the timing of interventions. Results for variations without mobilization costs (with \(n \in \{2,3,4\}\)) are reported in~\ref{appendix:ablation_study}. Finally, the system-level risk is defined as \(c_f \times p_f\), where the cost of failure is 
\(
c_f = \sum_{m=1}^n \texttt{repair\_reward}_m \times \kappa.
\)
We fix the failure penalty factor, \(\kappa = 3\), while \(\sum_{m=1}^n \texttt{repair\_reward}_m\) depends on the number of components \(n\). The probability of system failure \(p_f\) is computed using the closed-form expression for \koutofn{} systems~\citep{kuo_optimal_2003}.

\paragraph{Observation models} Every agent has a unique observation accuracy with which it perceives the damage state of the component during inspection. The observation accuracies for each component are \(0.8, 0.85, 0.9, 0.9\), respectively. Furthermore, the failure state is not fully observable, and has failure observation accuracies of \(0.95, 0.9, 0.98, 1\), respectively. The observation model for inspection for each component \(O_m^\text{inspect} \coloneqq O_m(o_m \mid s_m, \textrm{inspection})\), and no inspection \(O_m^\text{do-nothing/repair} \coloneqq O_m(o_m \mid s_m, \{\textrm{do-nothing, repair}\})\) are given as follows,
\[
\begin{array}{cccc}
\overset{O_1^\text{inspect}}{\begin{bmatrix}
0.8 & 0.2 & 0 \\
0.1 & 0.8 & 0.1 \\
0 & 0.05 & 0.95
\end{bmatrix}}
\overset{O_2^\text{inspect}}{\begin{bmatrix}
0.85 & 0.15 & 0 \\
0.075 & 0.85 & 0.075 \\
0 & 0.1 & 0.9
\end{bmatrix}}
\overset{O_3^\text{inspect}}{\begin{bmatrix}
0.9 & 0.1 & 0 \\
0.05 & 0.9 & 0.05 \\
0 & 0.02 & 0.98
\end{bmatrix}}
\overset{O_4^\text{inspect}}{\begin{bmatrix}
0.9 & 0.1 & 0 \\
0.05 & 0.9 & 0.05 \\
0 & 0 & 1
\end{bmatrix}} &
\overset{O_m^\text{do-nothing/repair}}{\begin{bmatrix}
1/3 & 1/3 & 1/3 \\
1/3 & 1/3 & 1/3 \\
1/3 & 1/3 & 1/3
\end{bmatrix}}.
\end{array}
\]

\paragraph{Initial belief, time horizon, and discount factor} All components have the same initial belief, $b_0 = (0.6, 0.4, 0)$ and the system must be maintained for an infinite number of time steps, using the infinite-horizon objective in Equation~(\ref{eq:I&M optimization}) with a discount factor \(\gamma = 0.8\). This infinite-horizon formulation avoids modeling time in the state, preventing linear growth in the state space and allowing us to scale to multiple components using the point-based solver, SARSOP, to obtain (near-)optimal baselines. During training, each episode is truncated at 50 time steps to stabilize learning under infinite-horizon discounting, as the returns become negligible \((0.8^{50} \approx 10^{-5})\). Since the policy is stationary, inference is limited to 20 time steps to reduce computational cost. Performance is evaluated using the discounted sum of returns.
\section{Experimental setup}
\label{section:experimental_setup}
In our experiments, we examine how redundancy impacts coordination and learning in decentralized multi-agent settings. To this end, we benchmark several MARL algorithms spanning CTCE, CTDE, and DTDE on the Climb Game and on the \(k\)-out-of-\(n\) environments described in the previous sections. Performance is measured by the expected discounted return (or cost) \(J\) from Equation (\ref{eq:I&M optimization}) and the variability across random seeds. To interpret the results, we compare against (near-)optimal SARSOP policies and an optimized heuristic baseline. To validate these findings robustly, we perform ablations that remove mobilization costs and vary the failure penalty factor \(\kappa\), confirming that the trends persist across reward specifications.

\subsection{Evaluation protocol}
We follow the evaluation protocol outlined in~\citep{papoudakis_benchmarking_2021,gorsane_towards_2022}, training each algorithm for a fixed computational budget and periodically evaluating performance using Monte Carlo rollouts on the objective \(J\) in Equation~(\ref{eq:Dec-POMDP optimization}) (or Equation~(\ref{eq:I&M optimization}) for \koutofn{} systems). Off-policy algorithms are trained for \(N = 100{,}000\) episodes in the Climb Game and \(N = 50{,}000\) episodes in \koutofn{} systems, with evaluation every \(I = 5{,}000\) episodes. On-policy algorithms (MAPPO-PS, IPPO-PS) are trained for 20 million time steps in \koutofn{} environments and 10 million time steps in the Climb Game, with evaluation every 1 million time steps. Each evaluation uses 100,000 Monte Carlo rollouts. A large number of rollouts is necessary as the objective \(J\) exhibits high variance~\citep{leroy_imp-marl_2023}.

At the end of training, we obtain $\eta = \frac{N}{I}$ evaluations, which we use to plot the learning curves. During evaluation, we set the exploration (epsilon in epsilon-greedy) to 0, and use sampling to query stochastic policies. The best evaluation \(J^* = \max \{J_1, \ldots, J_\eta \}\) (min. for objective in Equation~(\ref{eq:I&M optimization})) represents the performance of that training instance. Furthermore, the consistency of the algorithm is measured by training several instances of each algorithm by varying neural network parameter initializations and environment conditions using random seeds.

\subsection{Environments, baselines, and learning details}

\paragraph{Climb Game} Following~\cite{papoudakis_benchmarking_2021}, we set up a repeated matrix game over 25 time steps to assess the performance of the various MADRL algorithms. Since the optimal reward is \(11\), the optimal full-horizon return is \(11\times25=275\). We provide a constant input of zero to the neural networks, and in the case of multi-agent algorithms with parameter sharing, the neural network is additionally provided with agent IDs as input. We assess the performance of all algorithms over 30 random seeds.

\paragraph{\koutofn{} environments} During training, each episode is truncated after 50 time steps, which practically suffices to approximate an infinite-horizon problem as explained in Section \ref{sec:I&M environments}.
Since truncation is due to a time limit rather than an absorbing terminal state, the continuation value at the final step is in general nonzero. Accordingly, when computing returns and advantages, we bootstrap at the truncation boundary using the critic. This is in contrast to true terminations, such as in finite-horizon settings, where the bootstrap value is set to zero.
During inference, we evaluate the performance over 20 time steps using 100,000 Monte Carlo rollouts. During both training and inference, we report the expected discounted returns. Performance is assessed on 10 random seeds for each algorithm. In these environments, agents only receive beliefs over component damage states as input. Time steps are omitted from the input since policies are stationary in infinite-horizon environments. For multi-agent algorithms with parameter sharing, agent IDs are additionally provided. The hyperparameters for all algorithms are tuned on the 3-out-of-4 system, and are listed in Appendix~\ref{tab:hyperparameters}.

\paragraph{SARSOP baseline} To evaluate the performance of MADRL algorithms in the infinite-horizon environments, we establish a (near-)optimal baseline using SARSOP~\citep{kurniawati_sarsop_2008}, an offline planning algorithm to solve POMDPs. The SARSOP policy is further improved using one-step look-ahead (see Section \ref{sec:POMDP solvers} for details). We evaluate the performance using 100,000 Monte Carlo rollouts, and also use it to contrast DRL policies in belief-space plots. SARSOP also provides upper and lower bounds on expected returns, which we report along with the Monte Carlo returns. In this work, we used an existing \texttt{Julia} implementation of SARSOP, with POMDPs specified in the \texttt{POMDPs.jl} framework~\citep{Julia_POMDPs}. This allowed us to model the \koutofn{} systems directly in \texttt{Julia} and obtain (near-)optimal policies while maintaining consistency with the environments used to train the DRL algorithms. Convergence plots for SARSOP for the \(k\)-out-of-4 systems are shown in Figure~\ref{fig:sarsop_convergence}.

\paragraph{Heuristic baseline} We evaluate an inspection–repair heuristic defined by three parameters: (i) the inspection interval, which sets the periodicity of inspections after which repair actions may be triggered, (ii) the number of components inspected at each interval, with priority given to those with the highest estimated failure probabilities, and (iii) the repair threshold, which specifies the minimum observed damage level at inspection that triggers repair. The search space is obtained by varying each parameter across its admissible range: inspection intervals from 1 up to 20, number of inspected components from 1 up to \(n\), and repair thresholds across the number of damage states. For each parameter combination, we compute the discounted return using 100,000 Monte Carlo rollouts and report the best-performing heuristic as a baseline for comparison.

\subsection{Ablations and validation}
\label{section:ablations_and_validation}

In addition to the \(n=4\) base system, we study smaller systems with \(n=2,3\), defined as consistent subsets of the base case by reusing the first two or three components, respectively (including their deterioration, cost, and observation parameters). These systems exclude mobilization costs to isolate the effect of redundancy on coordination. We also conduct an ablation with an alternate reward model, where the failure penalty factor is reduced from \(\kappa = 3\) to \(\kappa = 1.5\). For clarity of exposition, these additional results, reported in~\ref{appendix:ablation_study}, show that the main findings hold across different system sizes and reward specifications, confirming that the observed trends are not artifacts of a particular cost structure. Furthermore, we validate our MADRL implementations by comparing a subset of algorithms—QMIX-PS, MAPPO-PS, and IPPO-PS, against the open-source MARL library EPyMARL~\citep{papoudakis_benchmarking_2021} (see~\ref{appendix:validation} for details).

Our benchmark environments, algorithm implementations, baselines, and trained models for reproducing the results of this paper are publicly available at
\href{https://github.com/prateekbhustali/price-of-decentralization}{\texttt{github.com/prateekbhustali/price-of-decentralization}}.

\section{Results and discussion}

Following the evaluation protocol outlined in the previous section, the box plots in Figures \ref{fig:climb-game} and \ref{fig:kn_infinite_horizon} show the best performance over ten training runs (random seeds) for each DRL algorithm. In Figure~\ref{fig:kn_infinite_horizon}, we normalize results with respect to the (near-)optimal SARSOP baseline to enable comparison across system configurations. In Tables~\ref{tab:best_results_kn_infinitehorizon} and \ref{tab:all_results}, the LBound is an approximate lower bound on costs, and UBound is the upper bound on costs computed through the point-based backups. In addition, we report the heuristic baseline (relative to SARSOP) for comparison. This reinforces the need for (near-)optimal baselines such as SARSOP, which provide a principled ground-truth reference for evaluation, as heuristic performance varies across problem settings and cannot reliably assess algorithmic performance.

\paragraph{Learning curves} We plot learning curves using the periodic evaluations performed during training, as described in the previous section. Each point on the curve corresponds to the average return/costs obtained from evaluation episodes conducted every \(I\) environment steps, yielding \(\eta = \frac{N}{I}\) total evaluations. These curves illustrate the agent’s performance progression over time in the Climb Game and all \(k\)-out-of-4 environments in Figure~\ref{fig:learning_curves_climb_game} and Figure~\ref{fig:learning_curves_kn_infinite}, respectively, in~\ref{appendix:additional_results}.

\paragraph{Belief space plots} In Figure~\ref{fig:belief_occupany_1of4}, we analyze the best-performing polices using the belief space for the \(k\)-out-of-4 systems. They serve as a lens for contrasting the (near-)optimal SARSOP baseline, centralized DRL policies, and decentralized MADRL policies. In these plots, each 2D simplex represents the belief over 3 states, expressed in barycentric coordinates, where the vertices correspond to the damage states $\{\mathtt{s_1} = \text{no-damage}, \mathtt{s_2} = \text{major-damage}, \mathtt{s_3} = \text{failure}\}$. At the vertices, the damage state is perfectly known (no uncertainty over the damage state); however, the uncertainty increases as we move towards the interior and is maximum at the centroid. On the belief space, we plot the actions taken by the agent(s) over 100 Monte Carlo rollouts starting from the given initial belief $b_0$. If no repair actions are taken, components slowly drift towards the failed state (\(\mathtt{s_3} = \text{failure}\)).

\begin{figure}[t]
    \centering
    \includegraphics[width=1\linewidth]{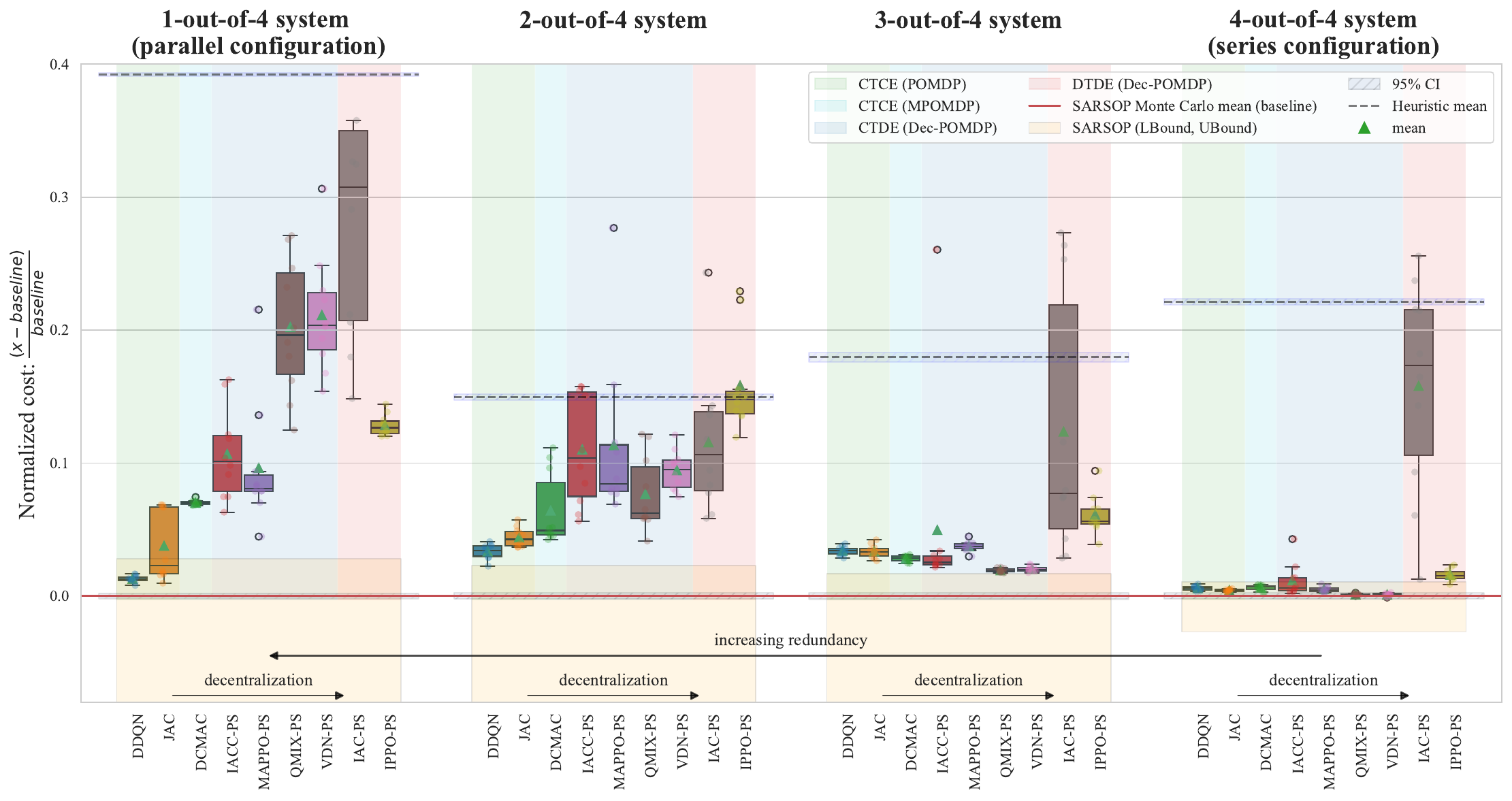}
    \caption{Box plots summarize algorithm performance across \(k\)-out-of-4 systems, aggregated over 10 training instances. 
    The results are normalized using the expected cost of the (near-)optimal SARSOP policy (Table~\ref{tab:best_results_kn_infinitehorizon}), allowing comparison across varying redundancy levels. 
    Most decentralized algorithms perform near-optimally in series configurations \((k=n)\), but deviate from the (near-)optimal SARSOP baseline as redundancy increases, with the largest performance drop observed in the parallel configurations \((k=1)\).}
    \label{fig:kn_infinite_horizon}
\end{figure}

\begin{table}[t]
\centering
\resizebox{\columnwidth}{!}{
\begin{tabular}{llcccc}
\toprule
\multicolumn{2}{c}{\textbf{\(k\)-out-of-4 systems}} & \textbf{1-out-of-4 system} & \textbf{2-out-of-4 system} & \textbf{3-out-of-4 system} & \textbf{4-out-of-4 system} \\
\midrule
\multirow{2}{*}{\textbf{SARSOP}} & (LBound, UBound) & (43.67, 51.82) & (113.19, 141.76) & (272.84, 309.06) & (727.36, 755.52) \\
 & Expected return & \textbf{50.41 (50.31, 50.51)} & \textbf{138.59 (138.25, 138.93)} & \textbf{303.93 (303.18, 304.68)} & \textbf{747.54 (745.92, 749.15)} \\
\midrule
\textbf{Heuristic} & Expected return & 70.19 (70.12, 70.25) & 159.32 (159.01, 159.63) & 358.51 (357.44, 359.57) & 913.03 (911.36, 914.70) \\
\midrule
\multirow{3}{*}{\textbf{CTCE}} & DDQN & 50.81 (50.71, 50.90) & 141.70 (141.48, 141.92) & 312.61 (311.86, 313.36) & 750.20 (748.62, 751.77) \\
 & JAC & 50.89 (50.81, 50.98) & 143.66 (143.38, 143.94) & 311.95 (311.20, 312.70) & 749.90 (748.34, 751.46) \\
 & DCMAC & 53.83 (53.75, 53.92) & 144.45 (144.22, 144.69) & 311.37 (310.57, 312.17) & 749.57 (747.99, 751.16) \\
\midrule
\multirow{4}{*}{\textbf{CTDE}} & IACC-PS & 53.58$^\ddagger$ & 146.39 (146.10, 146.68) & 310.48 (309.68, 311.28) & 748.78 (747.21, 750.35) \\
 & MAPPO-PS & 52.66 (52.60, 52.72) & 148.16 (147.87, 148.46) & 312.95 (312.14, 313.77) & 749.34 (747.76, 750.92) \\
 & QMIX-PS & 56.71 (56.62, 56.79) & 144.31 (144.03, 144.58) & 309.16 (308.38, 309.94) & 747.53 (745.96, 749.09) \\
 & VDN-PS & 58.18 (58.08, 58.27) & 148.93 (148.67, 149.20) & 309.25 (308.48, 310.01) & 746.64 (745.07, 748.21) \\
\midrule
\multirow{2}{*}{\textbf{DTDE}} & IAC-PS & 57.90 (57.77, 58.02) & 146.67 (146.38, 146.96) & 312.59 (311.77, 313.41) & 756.81 (755.22, 758.40) \\
 & IPPO-PS & 56.46$^\ddagger$ & 155.12 (154.88, 155.36) & 315.77 (314.96, 316.58) & 754.02 (752.44, 755.60) \\
\bottomrule
\end{tabular}
}
\caption{Best performance of single and multi-agent DRL algorithms across \(k\)-out-of-4 systems over 10 training instances. Reported values are expected costs, computed as discounted costs via Monte Carlo rollouts, with 95\% confidence intervals in brackets. SARSOP, using one-step look-ahead, serves as the near-optimal baseline. SARSOP expected costs shown in bold are also used to normalize results in Figure~\ref{fig:kn_infinite_horizon}. Entries marked with (\(\ddagger\)) correspond to deterministic periodic repair policies and exhibit negligible variance.}
\label{tab:best_results_kn_infinitehorizon}
\end{table}

\subsection{CTCE algorithms}

\subsubsection*{JAC and DDQN}
The SADRL approaches, JAC and DDQN, learn centralized joint policies and consistently achieve the optimal return in the Climb Game, as shown in Figure \ref{fig:climb-game}. Their performance remains near-optimal across most \(k\)-out-of-\(4\) systems, as shown in Figure~\ref{fig:kn_infinite_horizon} and Table~\ref{tab:best_results_kn_infinitehorizon}. 
Notably, JAC exhibits suboptimal behavior in the 1-out-of-4 system, evident from the bimodal distribution in Figure~\ref{fig:kn_infinite_horizon}. To substantiate this further, we report performance across all 10 random seeds in Table~\ref{tab:all_results} in \ref{appendix:additional_results}. 
In the 1-out-of-4 system, the worst-performing JAC instance (seed 10), the learned joint policy maintains only component 2, while SARSOP and optimal JAC policies maintain both components 1 and 2. This is visualized in a policy rollout in Figure~\ref{fig:suboptimal_JAC} in~\ref{appendix:additional_results}. In contrast, DDQN exhibits greater stability in performance and is also more parameter-efficient to train, as reported in Table~\ref{tab:hyperparameters}.

\subsubsection*{DCMAC}
In the Climb Game, DCMAC consistently converges to a suboptimal policy that selects the reward-7 action, yielding a return of \(25 \times 7 = 175\). This, arguably, stems from how DCMAC factorizes the joint policy as \(\boldsymbol{\pi}(\mathbf{a} \mid \mathbf{h}) = \prod_{m=1}^M \pi_m(a_m \mid \mathbf{h})\) (Equation~\ref{eq:PG DCMAC}) to address the curse of dimensionality. Following Equation~\ref{eq:PG JAC}, the per-sample joint policy gradient update for JAC in the Climb Game is:
\begin{equation}
g_\theta = \nabla_\theta \Big[\log 
\underbrace{\pi(a_2 \mid a_1;\theta)}_{\text{conditioned on $a_1$}}
+ \log \pi(a_1;\theta)\Big]\cdot\mathcal{A}, 
\qquad \textrm{(JAC)}
\end{equation}
where $\mathcal{A}$ is the joint advantage. Since there are no observations or states, \(\mathbf{h}\) is omitted. In contrast, DCMAC assumes conditional independence of actions given the policy parameters, i.e., \(\pi(a_2 \mid a_1; \theta) = \pi(a_2; \theta)\), and thus following Equation~\ref{eq:PG DCMAC}:
\begin{equation}
g_\theta = \nabla_\theta \Big[\log 
\underbrace{\pi_2(a_2;\theta)}_{\text{independent of $a_1$}}
+ \log \pi_1(a_1;\theta)\Big]\cdot\mathcal{A}, 
\qquad \textrm{(DCMAC)}
\end{equation}

\noindent which updates each agent’s action probability independently with the same advantage estimate. We posit that this conditional independence assumption contributes to the observed performance gap.

\begin{figure}
    \centering
    \includegraphics[width=0.76\linewidth]{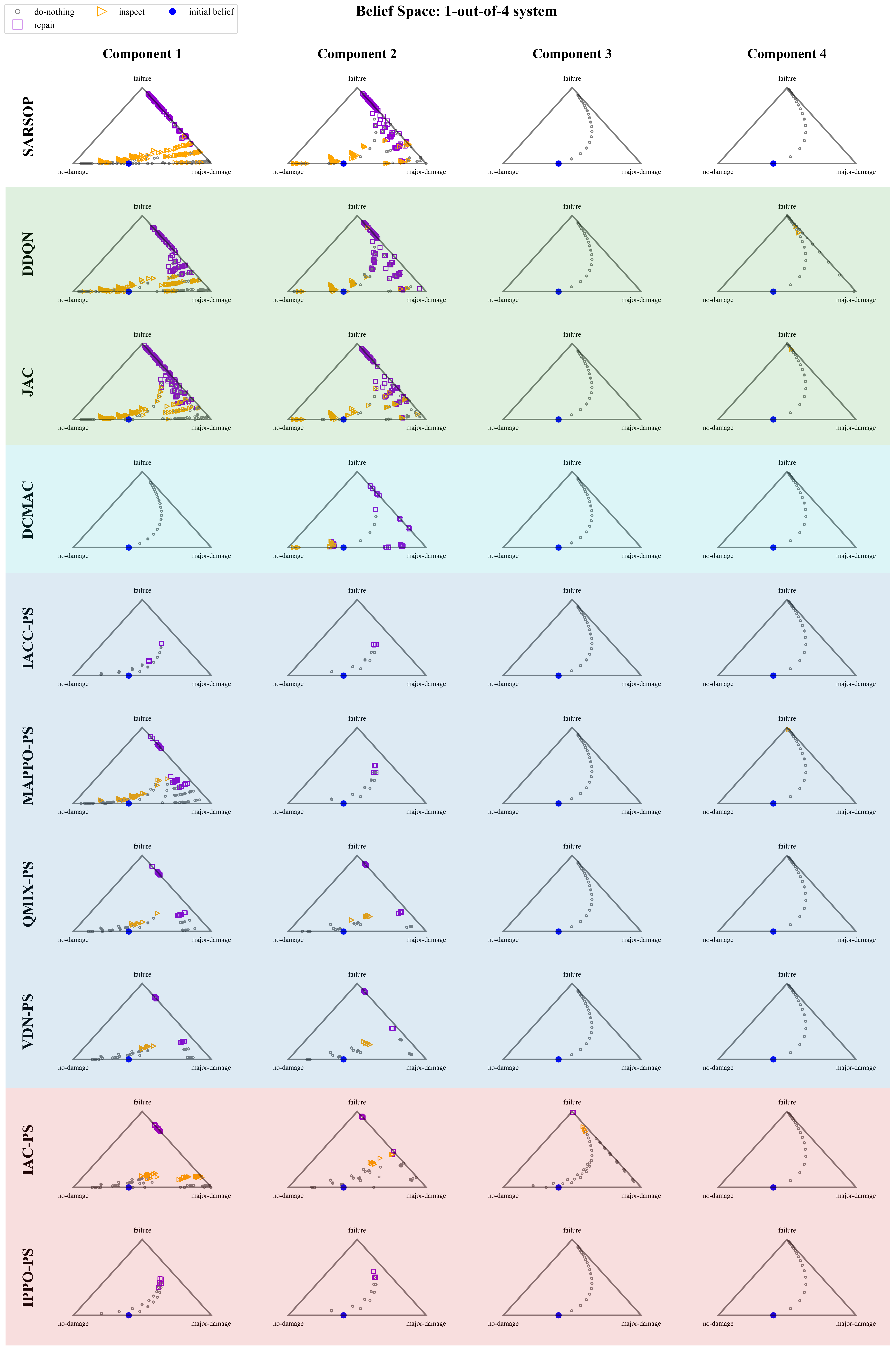}
    \caption{Belief space plots. Using the best-performing training instance of each algorithm, we visualize the learned policies over the belief space. Specifically, we show the actions taken over 100 Monte Carlo rollouts of 20 time steps in the 1-out-of-4 system. Each rollout begins from the initial belief \(b_0=(0.6,0.4,0)\) indicated with a blue dot (\textcolor{blue}{\textbullet}). The SARSOP baseline, JAC, and DDQN coordinate repairs across components 1 and 2 and use inspections to keep beliefs updated; they let components 3 and 4 drift toward failure to optimally exploit redundancy. Decentralized algorithms, such as DCMAC, IACC-PS, MAPPO-PS, IPPO-PS, learn a structured coordination of components 1 and 2 (e.g., periodic repairs to eliminate non-stationarity), yet these policies remain suboptimal as detailed in Table~\ref{tab:best_results_kn_infinitehorizon}.}
    \label{fig:belief_occupany_1of4}
\end{figure}

Despite these limitations, DCMAC generally performs well in the series and series-like configurations in the \(k\)-out-of-4 systems. However, performance tends to become suboptimal in the presence of redundancies, suggesting that access to centralized information alone is insufficient to fully resolve coordination challenges induced by redundancy. For example, in the 1-out-of-4 system, the best-performing instance of DCMAC consistently maintains only component 2, as shown in the belief space plot in Figure~\ref{fig:belief_occupany_1of4}. This mirrors the behavior of suboptimal JAC instances. In contrast, SARSOP, JAC, and DDQN maintain both components 1 and 2, leading to marginally lower costs as reported in Table~\ref{tab:best_results_kn_infinitehorizon}. Overall, as seen in the box plots in Figure~\ref{fig:kn_infinite_horizon}, DCMAC can generally provide reasonable approximations of centralized policies, and exhibits greater consistency across \(k\)-out-of-4 systems compared to the CTDE algorithms. 

\subsection{CTDE algorithms}

\subsubsection*{IACC-PS and MAPPO-PS}
\begin{figure}
    \centering
    \includegraphics[width=1\linewidth]{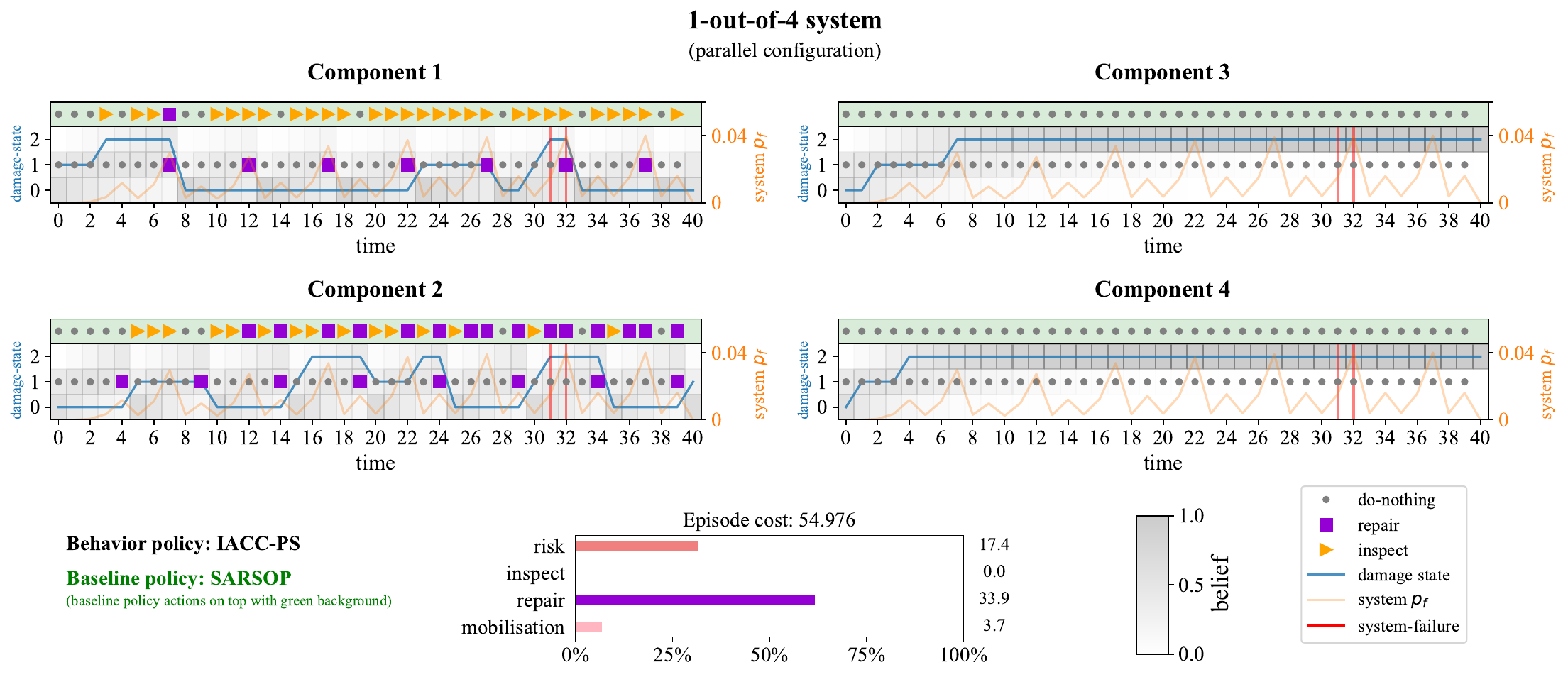}
    \caption{A sample rollout of the best IACC-PS policy in the 1-out-of-4 system for 40 time steps. Interestingly, agent 1 chooses a repair action every 5 and 7 time steps, and agent 2 chooses repairs every 5 time steps, even though neither observes the time step, and both are purely based on their individual components' beliefs.}
    \label{fig:IACC_PS_40}
\end{figure}

Similar to DCMAC, IACC-PS often converges to a suboptimal equilibrium with reward 7 in the Climb Game, achieving a return of \(25 \times 7 = 175\). The centralized critic is unable to efficiently guide coordination, as agents update policies independently with the same advantage estimate. Instances achieving optimal performance are outliers. 
In the 1-out-of-4 system, we make two key observations in the policy realization of the best IACC-PS instance in Figure \ref{fig:IACC_PS_40}. First, only agents 1 and 2 take intervention actions, while agents 3 and 4 remain inactive (as their components are more costly to maintain) to exploit the redundancy in the system. Secondly, agent 1 chooses to repair every 5 and 7 time steps, whereas agent 2 chooses to repair every 5 time steps. 
Although agents do not observe the global time step, they appear to infer the temporal structure implicitly~\cite{deverett2019interval}. With no inspection actions, belief updates are fully deterministic under the stationary deterioration model. Moreover, purely repair-based policies eliminate non-stationarity, mitigating a key multi-agent learning pathology, and lead to expected costs with negligible variance as shown in Table \ref{tab:best_results_kn_infinitehorizon}. In particular, agents 1 and 2 appear to construct a cooperative strategy: (i) they forgo inspection, unlike the SARSOP policy, to maintain exact belief state tracking; (ii) they adopt a time-based repair policy; and (iii) they alternate repairs among each other to avoid simultaneous failures, thereby minimizing system-level risk. 
The repair periodicity arises from the structure of the induced Markov chain over beliefs: \( T_m^{\pi_m}\left(b_m^{\prime} \mid b_m\right)=\sum_{a_m \in A_m} \pi_m\left(a_m \mid b_m\right) T_m\left(b_m^{\prime} \mid b_m, a_m\right) \), whose dynamics govern the regular repair patterns observed in the policy realization. 
We exemplify this further using the belief space plots in Figure \ref{fig:belief_occupany_1of4}. Clearly, components 3 and 4 always drift towards the failed state since they are never repaired. In contrast to the other policies, the belief space of IACC-PS is very sparsely occupied, even over 100 policy realizations. Periodic repair actions, when combined with a stationary deterioration model, induce a deterministic Markov chain in the belief space. This is also reflected in the constant cost observed across evaluation episodes, as reported in Table~\ref{tab:best_results_kn_infinitehorizon}. 

MAPPO-PS, the on-policy counterpart of IACC-PS, also fails to reach the optimal returns in the Climb Game.
In the 1-out-of-4 system, MAPPO-PS achieves the best performance among decentralized agents. While worse than DCMAC on average, it produces an outlier that outperforms DCMAC when comparing best instances. As shown in the belief space plot in Figure~\ref{fig:belief_occupany_1of4}, it learns a periodic repair policy for component 2 (like IACC-PS), unlike IACC-PS, it uses both inspections and repairs to maintain component 1. Together, these results show how decentralized approximations may converge to structured coordination strategies, but remain suboptimal compared to centralized policies, particularly in redundant systems. Consistent patterns are also observed in the additional studies with \(n \in \{2,3\}\) and with an alternate reward model (\ref{appendix:ablation_study}).

\subsubsection*{VDN-PS and QMIX-PS}

\begin{figure}[t]
\centering
\resizebox{\linewidth}{!}{
\begin{tabular}{@{}c@{\hspace{25pt}}c@{\hspace{25pt}}c@{}}

\begin{tabular}{@{}cc|ccc@{}}
\multicolumn{5}{c}{\textbf{Climb Game}} \\
\addlinespace[10pt]
\multicolumn{2}{c|}{} & \multicolumn{3}{c}{Agent 1} \\
\multicolumn{2}{c|}{} & $\mathfrak{a}^1$ & $\mathfrak{a}^2$ & $\mathfrak{a}^3$ \\ \midrule
\multirow{3}{*}{\rotatebox{90}{Agent 2}} &
$\mathfrak{a}^1$ & \(11^*\) & -30 & 0 \\
& $\mathfrak{a}^2$ & -30 & 7 & 6 \\
& $\mathfrak{a}^3$ & 0 & 0 & 5 \\
\end{tabular}

&

\begin{tabular}{@{}cc|ccc@{}}
\multicolumn{5}{c}{\textbf{VDN-PS Decomposition}} \\
\addlinespace[10pt]
 &  & $Q_1(\mathfrak{a}^{1})$ & $Q_1(\mathfrak{a}^{2})$ & $Q_1(\mathfrak{a}^{3})$ \\
 &  & -4.18 & -7.35 & \textcolor{Orange}{6.31} \\ \midrule
$Q_2(\mathfrak{a}^{1})$ & -4.76 & -8.95 & -12.11 & 1.55 \\
$Q_2(\mathfrak{a}^{2})$ & \textcolor{Orange}{-0.33} & -4.51 & -7.67 & \textbf{5.99} \\
$Q_2(\mathfrak{a}^{3})$ & -0.77 & -4.95 & -8.11 & 5.55 \\
\end{tabular}

&

\begin{tabular}{@{}cc|ccc@{}}
\multicolumn{5}{c}{\textbf{QMIX-PS Decomposition}} \\
\addlinespace[10pt]
 &  & $Q_1(\mathfrak{a}^{1})$ & $Q_1(\mathfrak{a}^{2})$ & $Q_1(\mathfrak{a}^{3})$ \\
 &  & \textcolor{Orange}{-0.13} & -5153.65 & -2.88 \\ \midrule
$Q_2(\mathfrak{a}^{1})$ & \textcolor{Orange}{1.19} & \textbf{11.10} & -9.54 & 1.65 \\
$Q_2(\mathfrak{a}^{2})$ & -5412.35 & -9.54 & -9.54 & -9.54 \\
$Q_2(\mathfrak{a}^{3})$ & -1.66 & 1.73 & -9.54 & 1.59 \\
\end{tabular}

\end{tabular}
}
\caption{Value decompositions for the Climb Game. Left: Climb Game rewards. Middle and right: value decompositions learned by the best-performing instances of VDN-PS and QMIX-PS, respectively. Tables show estimated joint action-values, with rows and columns corresponding to the agents’ local actions; orange entries denote the greedy joint action. While VDN-PS assigns the correct maximum to the optimal joint action \((a_3,a_2)\), its linear factorization misrepresents other action utilities. QMIX-PS, despite its monotonic mixing constraint, also misrepresents joint utilities, with optimal performance occurring only in outlier cases.}
\label{fig:value_decomposition}
\end{figure}
The value factorization algorithms VDN-PS and QMIX-PS are unable to correctly capture the joint utility of agents’ actions in the Climb Game. As shown in Figure~\ref{fig:climb-game}, VDN-PS can reach a maximum return of \(25 \times 6 = 150\) by correctly estimating the return of joint action \((a_3,a_2)\), but Figure~\ref{fig:value_decomposition} shows that its linear factorization does not predict the return of other action combinations correctly. QMIX-PS, though theoretically more expressive due to its monotonic factorization, also incorrectly factorizes utilities; some instances achieve the optimal return, but only as outliers where the greedy action coincidentally aligns with the optimum~\citep{albrecht_multi-agent_2023}. Thus, both methods demonstrate fundamental limitations in factorizing utilities, even in this simple setting.

As shown in Figure~\ref{fig:kn_infinite_horizon}, VDN-PS and QMIX-PS achieve optimal performance in the series (4-out-of-4) and series-like (3-out-of-4) systems. We posit that the inductive bias of value factorization—decomposing the joint value into agent-wise contributions—aligns with the structure of series and series-like systems, where most or all components must function simultaneously. However, in the 2-out-of-4 and 1-out-of-4 (parallel) systems, the inductive bias of value factorization is insufficient, as the problem structure is no longer well approximated by additive utilities. This misalignment likely leads to suboptimal performance in redundant systems. We provide intuition for this behavior in~\ref{appendix:value_decomposition_matrix_game}, where we prove that for a single-state, non-sequential matrix game, linear and monotonic value decompositions perform optimally in a series configuration due to their inductive bias, but are suboptimal for a parallel configuration.

\subsection{DTDE algorithms}

\subsubsection*{IAC-PS and IPPO-PS}
In the Climb Game, IPPO-PS performs similarly to its CTDE counterpart MAPPO-PS. This aligns with the theoretical findings of~\cite{lyu_centralized_2023}, which show that both centralized and decentralized critics marginalize over other agents, explicitly in the centralized case and implicitly in the decentralized one. The Climb Game is a simplified setting with no state or observation, making such marginalization relatively easy. 
In the \(k\)-out-of-4 systems, IPPO-PS consistently underperforms MAPPO-PS. In the 1-out-of-4 case, the best-performing IPPO-PS training instance, similar to MAPPO-PS, converges to a purely periodic repair strategy, as illustrated in Figure~\ref{fig:belief_occupany_1of4}. Agents 3 and 4 remain inactive and consistently select the do-nothing action, while agents 1 and 2 perform repairs at regular intervals of five time steps, with their schedules staggered across agents. Notably, this coordination emerges despite agents having no explicit access to the timestep and relying solely on local belief states. Nevertheless, the resulting IPPO-PS policy is less effective than the MAPPO-PS policy, as reflected by its higher expected costs in Table~\ref{tab:best_results_kn_infinitehorizon}.
IAC-PS, the off-policy variant of IPPO-PS, shows high variability across all \(k\)-out-of-4 systems, a trend also confirmed in the ablation studies (\ref{appendix:ablation_study}). While there are outlier runs that achieve optimal performance in series and series-like configurations, the algorithm rarely learns optimal policies in more difficult settings with redundancies, such as parallel configurations. This is likely due to the off-policy nature of IAC-PS, as also noted in~\cite{papoudakis_benchmarking_2021, lowe_multi-agent_2020}. Since updates rely on past experience, suboptimal trajectories can dominate the replay buffer if optimal actions are not sampled sufficiently.
\section{Conclusions and future work}
Inspection and maintenance planning in large-scale, partially observable engineering systems necessitates decentralized control to achieve scalability, as centralized approaches become intractable due to the exponential growth of joint state, observation, and action spaces. While multi-agent deep reinforcement learning enables scalable approximation through decentralized policy learning, it also introduces fundamental learning pathologies, including non-stationarity, multi-agent credit assignment, and coordination failures. These pathologies encumber the learning of optimal policies and give rise to what we term the \emph{price of decentralization}: the loss in optimality incurred when decentralized control is used to approximate a centralized policy. 

To examine these challenges in a controlled reliability context, we introduce benchmark environments based on \koutofn{} systems with \(n=4\), varying the redundancy parameter \(k\) from series \((k=n)\) to parallel \((k=1)\) configurations. By focusing on small systems, where point-based solvers, such as SARSOP, yield (near-)optimal baselines, we enable rigorous comparisons beyond heuristic methods. We find that heuristic baselines vary substantially in effectiveness across settings and can misrepresent algorithmic performance when used in isolation. 
We evaluate the three main decentralization paradigms: (i) centralized training and centralized execution, (ii) centralized training with decentralized execution, and (iii) fully decentralized training and execution, using nine representative algorithms spanning actor–critic and value-based methods under both on-policy and off-policy training. 
Fully centralized methods, such as JAC and DDQN, which have access to complete system information and predict joint actions, often provide near-optimal solutions in small systems but are intractable at larger scales.
While decentralization approaches promise scalability by distributing decision-making, they perform well mainly in series and series-like systems. Their performance degrades in systems with increased redundancy, with the steepest decline observed in parallel configurations.
Notably, DCMAC, which retains access to global information but decentralizes action selection, also exhibits optimality losses as redundancy increases, albeit less severe than those observed for fully decentralized approaches. This suggests that access to centralized information alone is insufficient to fully resolve coordination challenges induced by redundancy.
CTDE approaches based on value factorization, such as VDN-PS and QMIX-PS, exhibit a similar trend. In these methods, the additive inductive bias aligns with the problem structure in series and series-like configurations, but performance degrades in redundant systems where joint utilities cannot be captured by such factorizations. We provide analytical intuition for this behavior using a single-state two-agent game.
Supporting this interpretation, ablation studies with smaller systems (with \(n \in \{2,3\}\)) and with an alternate reward model show that the observed trends persist, indicating that they are primarily driven by redundancy rather than other system effects.
Although current decentralized algorithms struggle to achieve optimal coordination in systems with redundancies, a defining characteristic of real-world infrastructure, they nonetheless learn structured coordination patterns and provide reasonable approximations of centralized optima.

A key direction for future work concerns understanding the robustness of learned coordination mechanisms. As highlighted by recent findings on out-of-trajectory generalization~\cite{suau_bad_2023}, agents' policies may exploit spurious correlations in the environment to learn coordination, which generalize poorly and lead to brittle behaviors.
A complementary direction is to develop theoretical insights into the performance of actor–critic algorithms in the presence of redundancy. While we provide a proof of suboptimality for value factorization, recent theoretical work~\cite{lyu_centralized_2023} provides a promising foundation for extending such analysis to actor–critic approaches.

\section*{Acknowledgements}
Prateek Bhustali and Dr. Charalampos P. Andriotis gratefully acknowledge support from the TU Delft AI Labs program. This work used the DelftBlue supercomputer, operated by the Delft High Performance Computing Centre~\citep{DHPC2024}.
\section*{Declaration of generative AI and AI-assisted technologies in the manuscript.}

During the preparation of this manuscript, the author(s) made use of ChatGPT (OpenAI) only for language and text checks, toward improving clarity of presentation and detecting typos. In addition, Codex (OpenAI) was used to support code checks. All outputs generated with the assistance of these tools were reviewed, verified, and edited by the author(s), who take full responsibility for the content of the published article.

\bibliographystyle{elsarticle-num}
\bibliography{cas-refs}
\clearpage

\appendix

\section{Additional results}

\subsection{Additional figures and tables}
\label{appendix:additional_results}

\begin{figure}[ht]
    \centering
    \includegraphics[width=0.9\linewidth]{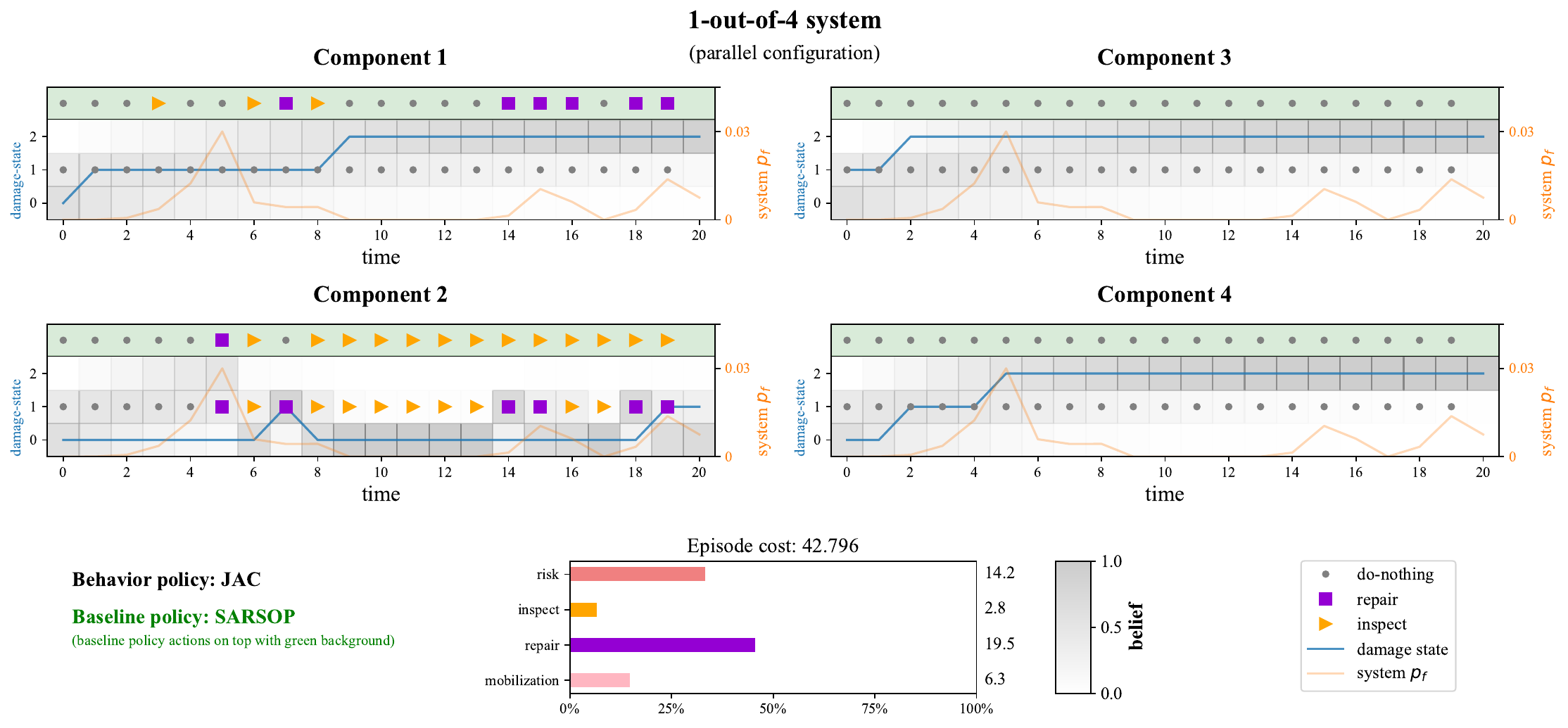}
    \caption{Sub-optimal JAC policy in the 1-out-of-4 system. While the optimal JAC and SARSOP policies intervene on components 1 and 2 to maintain system functionality, this learned policy only involves agent 2 taking intervention actions, and all other agents always choosing do-nothing. }
    \label{fig:suboptimal_JAC}
\end{figure}

\begin{figure}[h!]
    \centering
    \includegraphics[width=0.9\linewidth]{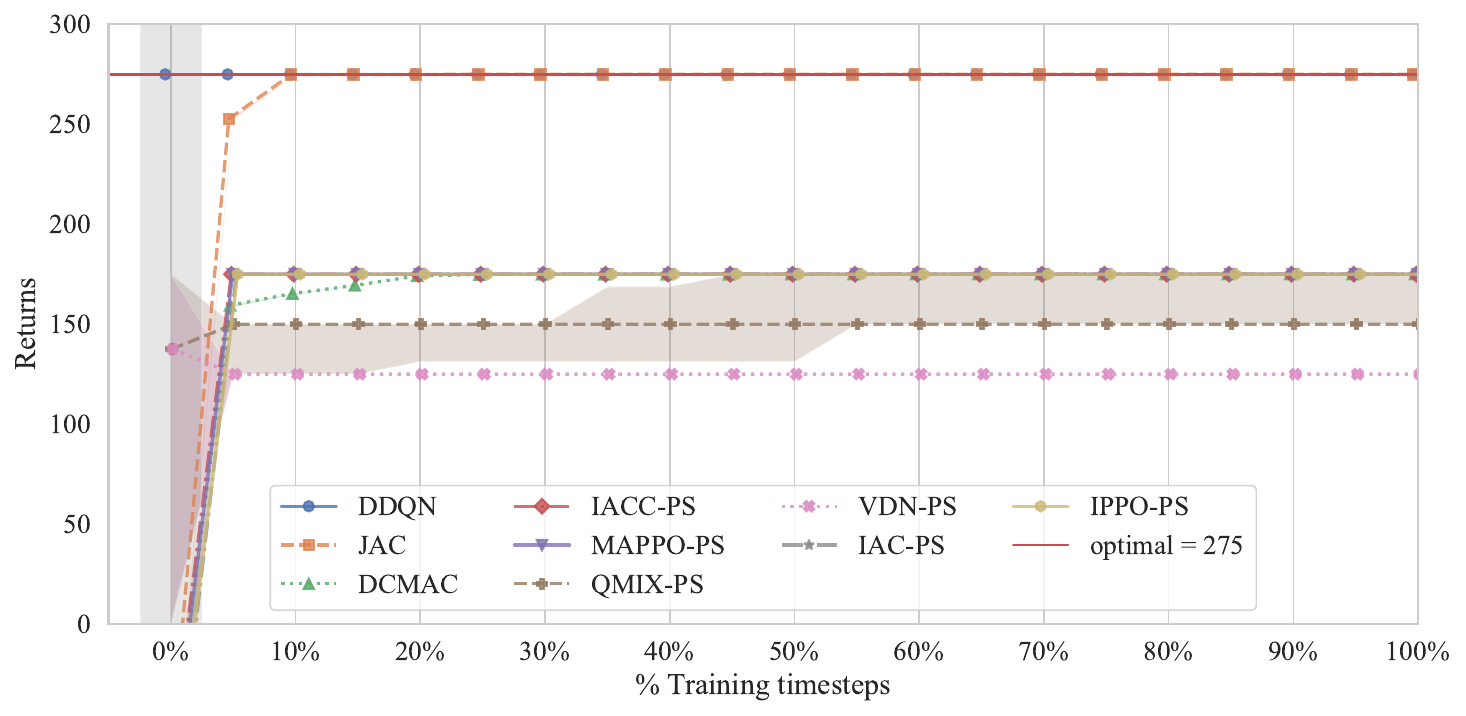}
    \caption{Learning curves of CTCE, CTDE, and DTDE algorithms on the Climb Game. Performance is evaluated periodically during training and shown as median cost across seeds, with shaded regions indicating the interquartile range. In this one-step environment, the initial episode is excluded from best-checkpoint selection, since random initializations can yield the optimal action before any learning occurs.}
    \label{fig:learning_curves_climb_game}
\end{figure}

\begin{table}[t]
\centering
\scriptsize
\setlength{\tabcolsep}{3pt}
\renewcommand{\arraystretch}{1.05}
\begin{tabular}{llcccccccccc}
\toprule
  &  & \multicolumn{10}{c}{Training instances} \\
\cmidrule(lr){3-12}
Paradigm & Algorithm & 1 & 2 & 3 & 4 & 5 & 6 & 7 & 8 & 9 & 10 \\
\midrule
\multicolumn{12}{c}{\textbf{1-out-of-4 system (parallel configuration)}} \\
\midrule
\multirow[c]{2}{*}{\textbf{Baseline}} & \textbf{SARSOP (MC)} & \multicolumn{10}{c}{\textbf{50.41} (50.31, 50.51)} \\
  & \textbf{Heuristic} & \multicolumn{10}{c}{\textbf{70.19} (70.12, 70.25)} \\
\midrule
\multirow[c]{3}{*}{\textbf{CTCE}} & DDQN & 50.81 & 50.91 & 50.99 & 51.04 & 51.04 & 51.06 & 51.10 & 51.15 & 51.23 & 51.25 \\
  & JAC & 50.89 & 51.20 & 51.24 & 51.30 & 51.36 & 51.78 & 53.78 & 53.79 & 53.86 & 53.87 \\
  & DCMAC & 53.83 & 53.86 & 53.87 & 53.88 & 53.89 & 53.96 & 53.97 & 53.99 & 54.00 & 54.16 \\
\cmidrule(lr){1-12}
\multirow[c]{4}{*}{\textbf{CTDE}} & IACC-PS & 53.58 & 54.17 & 54.17 & 55.01 & 55.36 & 55.64 & 56.38 & 56.53 & 58.44 & 58.61 \\
  & MAPPO-PS & 52.66 & 53.95 & 54.38 & 54.38 & 54.39 & 54.57 & 54.64 & 55.12 & 57.27 & 61.27 \\
  & QMIX-PS & 56.71 & 57.63 & 58.58 & 59.50 & 60.02 & 60.58 & 62.12 & 62.84 & 63.93 & 64.08 \\
  & VDN-PS & 58.18 & 58.85 & 59.59 & 60.17 & 60.62 & 60.75 & 61.64 & 61.99 & 62.94 & 65.85 \\
\cmidrule(lr){1-12}
\multirow[c]{2}{*}{\textbf{DTDE}} & IAC-PS & 57.90 & 59.47 & 60.79 & 61.08 & 65.07 & 66.78 & 66.88 & 68.45 & 222.23 & 222.24 \\
  & IPPO-PS & 56.46 & 56.48 & 56.54 & 56.65 & 56.67 & 56.91 & 56.94 & 57.08 & 57.39 & 57.69 \\
\midrule
\multicolumn{12}{c}{\textbf{2-out-of-4 system}} \\
\midrule
\multirow[c]{2}{*}{\textbf{Baseline}} & \textbf{SARSOP (MC)} & \multicolumn{10}{c}{\textbf{138.59} (138.25, 138.93)} \\
  & \textbf{Heuristic} & \multicolumn{10}{c}{\textbf{159.32} (159.01, 159.63)} \\
\midrule
\multirow[c]{3}{*}{\textbf{CTCE}} & DDQN & 141.70 & 142.57 & 142.62 & 143.01 & 143.24 & 143.41 & 143.73 & 143.81 & 143.88 & 144.25 \\
  & JAC & 143.66 & 143.70 & 143.77 & 143.88 & 144.04 & 144.94 & 145.20 & 145.37 & 145.88 & 146.53 \\
  & DCMAC & 144.45 & 144.85 & 144.86 & 145.10 & 145.17 & 145.66 & 145.73 & 151.94 & 153.02 & 154.04 \\
\cmidrule(lr){1-12}
\multirow[c]{4}{*}{\textbf{CTDE}} & IACC-PS & 146.39 & 147.07 & 148.50 & 150.34 & 152.08 & 153.79 & 159.51 & 160.00 & 160.39 & 160.43 \\
  & MAPPO-PS & 148.16 & 149.24 & 149.43 & 149.77 & 149.77 & 150.81 & 153.59 & 154.60 & 160.61 & 176.96 \\
  & QMIX-PS & 144.31 & 146.57 & 146.65 & 146.75 & 146.83 & 147.60 & 149.98 & 152.71 & 155.20 & 155.46 \\
  & VDN-PS & 148.93 & 149.51 & 149.76 & 150.45 & 151.25 & 152.31 & 152.53 & 152.84 & 153.89 & 155.38 \\
\cmidrule(lr){1-12}
\multirow[c]{2}{*}{\textbf{DTDE}} & IAC-PS & 146.67 & 147.08 & 149.33 & 150.16 & 151.70 & 154.93 & 157.46 & 157.95 & 158.43 & 172.30 \\
  & IPPO-PS & 155.12 & 157.33 & 157.48 & 157.89 & 158.79 & 159.30 & 159.47 & 160.11 & 169.45 & 170.36 \\
\midrule
\multicolumn{12}{c}{\textbf{3-out-of-4 system}} \\
\midrule
\multirow[c]{2}{*}{\textbf{Baseline}} & \textbf{SARSOP (MC)} & \multicolumn{10}{c}{\textbf{303.93} (303.18, 304.68)} \\
  & \textbf{Heuristic} & \multicolumn{10}{c}{\textbf{358.51} (357.44, 359.57)} \\
\midrule
\multirow[c]{3}{*}{\textbf{CTCE}} & DDQN & 312.61 & 313.16 & 313.45 & 313.72 & 314.12 & 314.48 & 314.67 & 314.86 & 315.41 & 315.86 \\
  & JAC & 311.95 & 312.51 & 312.87 & 313.71 & 313.91 & 314.10 & 314.47 & 314.95 & 316.03 & 316.75 \\
  & DCMAC & 311.37 & 311.41 & 311.89 & 311.99 & 312.56 & 312.65 & 312.99 & 313.04 & 313.06 & 313.40 \\
\cmidrule(lr){1-12}
\multirow[c]{4}{*}{\textbf{CTDE}} & IACC-PS & 310.48 & 310.95 & 311.07 & 311.15 & 311.16 & 312.00 & 312.51 & 313.23 & 314.22 & 383.10 \\
  & MAPPO-PS & 312.95 & 314.38 & 314.83 & 314.84 & 315.03 & 315.46 & 315.64 & 315.90 & 316.05 & 317.51 \\
  & QMIX-PS & 309.16 & 309.29 & 309.41 & 309.57 & 309.62 & 309.67 & 309.73 & 310.23 & 310.41 & 310.48 \\
  & VDN-PS & 309.25 & 309.63 & 309.65 & 309.70 & 309.79 & 310.01 & 310.24 & 310.42 & 310.96 & 311.26 \\
\cmidrule(lr){1-12}
\multirow[c]{2}{*}{\textbf{DTDE}} & IAC-PS & 312.59 & 313.02 & 317.01 & 326.35 & 326.62 & 328.05 & 339.10 & 380.90 & 384.10 & 386.95 \\
  & IPPO-PS & 315.77 & 319.74 & 320.21 & 320.63 & 320.69 & 321.42 & 323.83 & 323.86 & 326.38 & 332.53 \\
\midrule
\multicolumn{12}{c}{\textbf{4-out-of-4 system (series configuration)}} \\
\midrule
\multirow[c]{2}{*}{\textbf{Baseline}} & \textbf{SARSOP (MC)} & \multicolumn{10}{c}{\textbf{747.54} (745.92, 749.15)} \\
  & \textbf{Heuristic} & \multicolumn{10}{c}{\textbf{913.03} (911.36, 914.70)} \\
\midrule
\multirow[c]{3}{*}{\textbf{CTCE}} & DDQN & 750.20 & 750.32 & 750.91 & 751.23 & 751.58 & 752.27 & 752.77 & 752.99 & 753.11 & 754.19 \\
  & JAC & 749.90 & 749.96 & 750.01 & 750.04 & 750.76 & 750.86 & 750.98 & 751.27 & 751.31 & 751.64 \\
  & DCMAC & 749.57 & 749.66 & 751.18 & 751.47 & 752.25 & 753.03 & 753.13 & 753.31 & 753.82 & 753.86 \\
\cmidrule(lr){1-12}
\multirow[c]{4}{*}{\textbf{CTDE}} & IACC-PS & 748.78 & 749.90 & 750.39 & 750.44 & 751.48 & 752.82 & 757.60 & 757.88 & 763.93 & 779.51 \\
  & MAPPO-PS & 749.34 & 749.94 & 749.97 & 750.79 & 750.93 & 751.07 & 751.56 & 752.33 & 752.72 & 754.26 \\
  & QMIX-PS & 747.53 & 748.12 & 748.13 & 748.15 & 748.25 & 748.29 & 748.32 & 748.47 & 748.84 & 749.29 \\
  & VDN-PS & 746.64 & 747.33 & 747.83 & 748.17 & 748.29 & 748.52 & 748.55 & 748.79 & 748.80 & 749.44 \\
\cmidrule(lr){1-12}
\multirow[c]{2}{*}{\textbf{DTDE}} & IAC-PS & 756.81 & 792.81 & 817.22 & 854.56 & 870.68 & 883.67 & 908.00 & 908.67 & 924.88 & 938.68 \\
  & IPPO-PS & 754.02 & 756.72 & 757.01 & 758.28 & 758.72 & 759.33 & 759.89 & 761.67 & 764.04 & 764.91 \\
\bottomrule
\end{tabular}
\caption{The expected costs corresponding to the best checkpoint for each of the 10 training instances, reported in ascending order, across all algorithms and \(k\)-out-of-4 systems. These values correspond to the data summarized in the box plots shown in Figure~\ref{fig:kn_infinite_horizon}.}
\label{tab:all_results}
\end{table}

\begin{figure}[ht]
    \centering
    \includegraphics[width=1\linewidth]{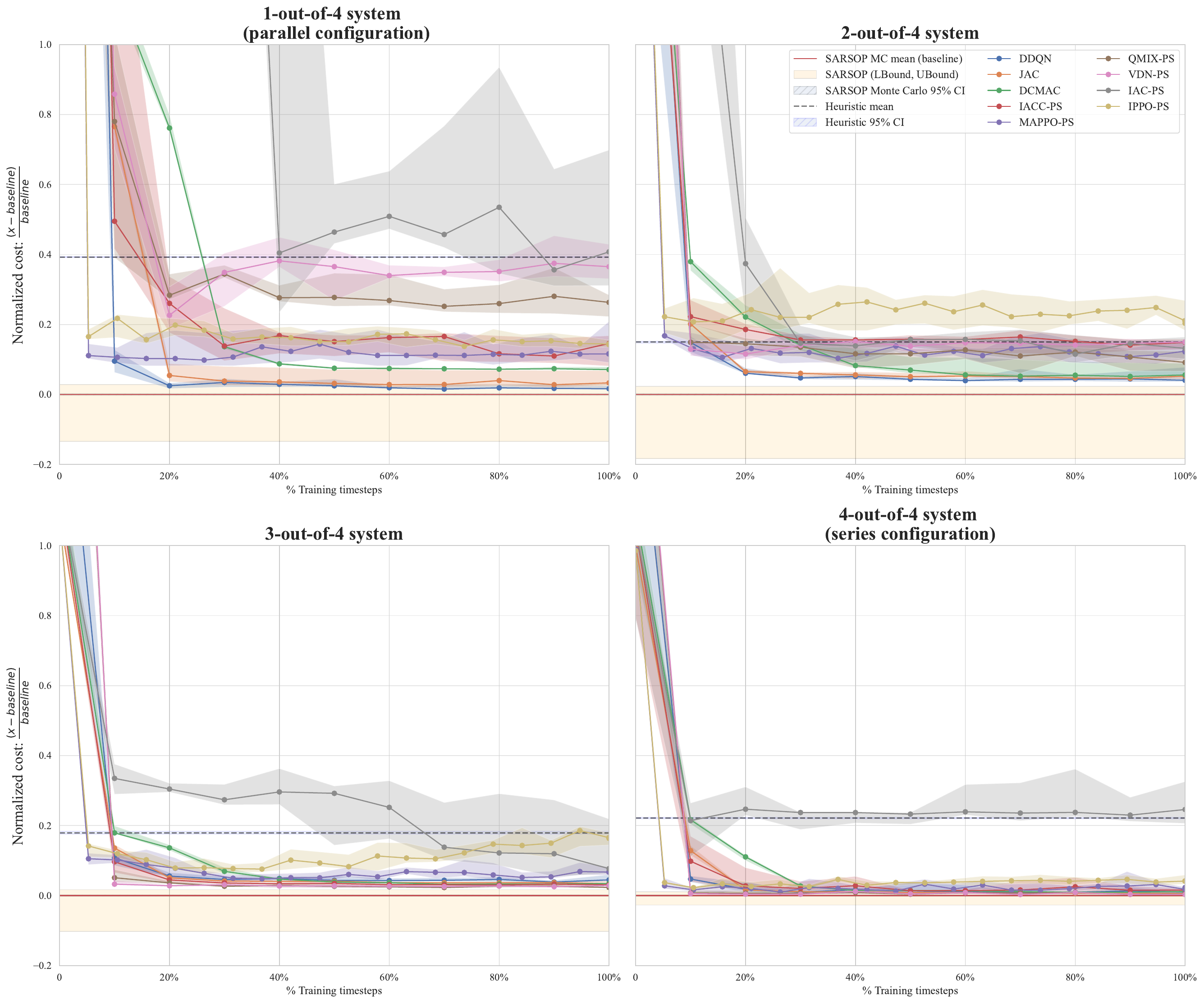}
    \caption{Learning curves of nine algorithms in all infinite-horizon \(k\)-out-of-4 systems summarized using 10 training instances. For off-policy algorithms, the performance is evaluated every 5,000 episodes during training, and for on-policy algorithms, we evaluate performance every 1M timesteps. Costs are normalised with respect to the SARSOP Monte Carlo mean, and the curves show the median performance across seeds, with shaded regions indicating the interquartile range (IQR).}
    \label{fig:learning_curves_kn_infinite}
\end{figure}

\clearpage

\begin{figure}[h!]
    \centering
    \includegraphics[width=0.9\linewidth]{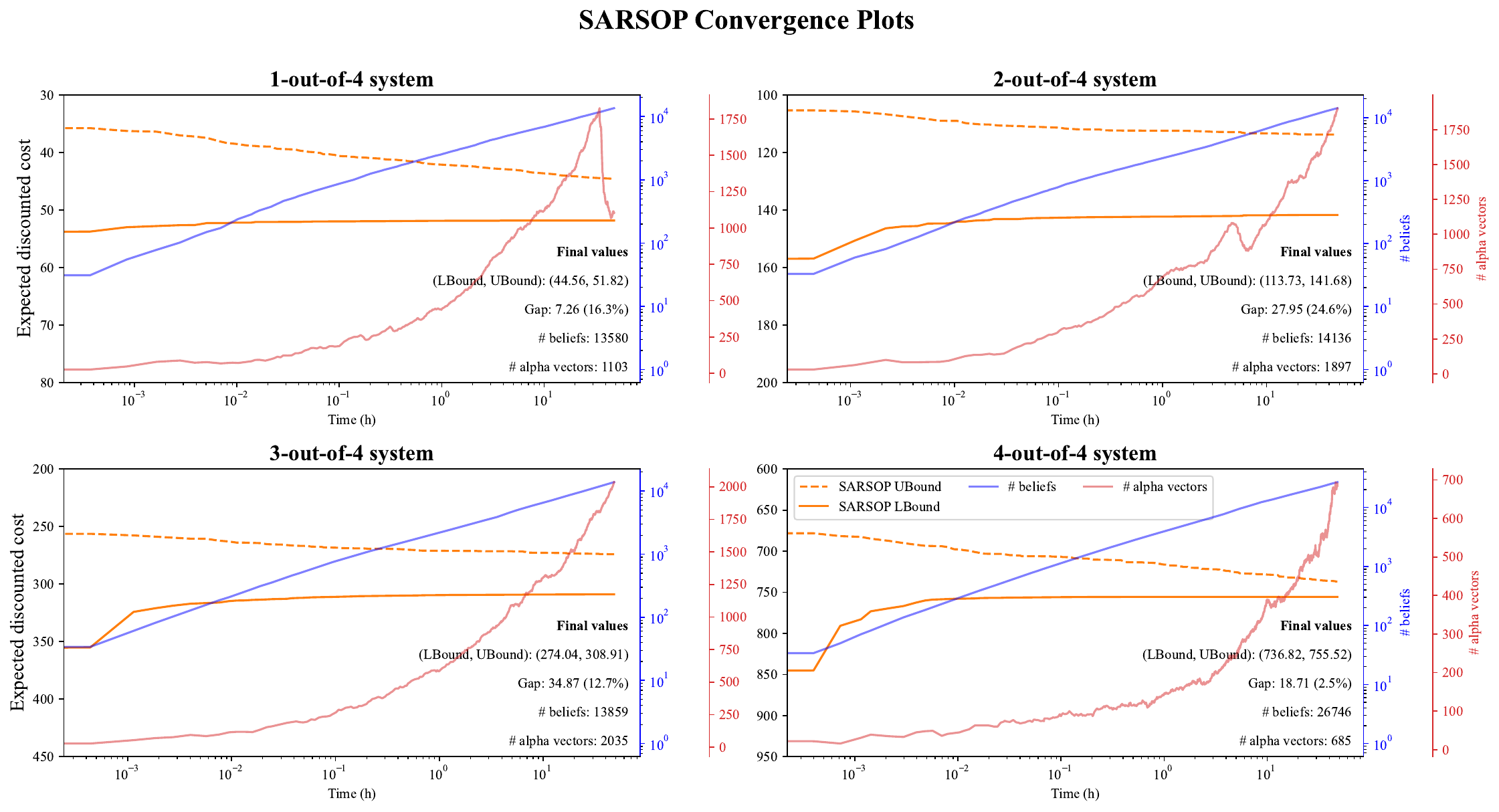}
    \caption{SARSOP convergence plots. All settings timeout after 48 hours, before upper and lower bounds converge. We observe sharp drops in the number of alpha vectors in the 1-out-of-4 and 2-out-of-4 systems, indicating significant pruning.}
    \label{fig:sarsop_convergence}
\end{figure}

\subsection{Hyperparameters}
\vspace{-0.5em}

\begin{table}[h!]
\centering
\resizebox{\textwidth}{!}{
\begin{tabular}{@{}ccccccccccc@{}}
\toprule
\multicolumn{2}{c}{Paradigm} &
  \multicolumn{3}{c}{\textbf{CTCE}} &
  \multicolumn{4}{c}{\textbf{CTDE}} &
  \multicolumn{2}{c}{\textbf{DTDE}} \\ \midrule
\multicolumn{2}{c}{Algorithm} &
  \textbf{DDQN} &
  \textbf{JAC} &
  \textbf{DCMAC} &
  \textbf{IACC-PS} &
  \multicolumn{1}{l}{\textbf{MAPPO-PS}} &
  \textbf{VDN-PS} &
  \textbf{QMIX-PS} &
  \textbf{IAC-PS} &
  \multicolumn{1}{l}{\textbf{IPPO-PS}} \\ \midrule
 &
  Episodes (E) &
  \multicolumn{4}{c}{50,000} &
  - &
  \multicolumn{3}{c}{50,000} &
  - \\
 &
  Timesteps (T) &
  \multicolumn{4}{c}{$E \times t_h = 50,000 \times 50 = 2.5$M} &
  20M &
  \multicolumn{3}{c}{2.5M} &
  20M \\ \midrule
\multirow{6}{*}{\textbf{Architecture}} &
  Actor &
  - &
  {[}12, 64, 64, 81{]} &
  {[}12, 32, 32, 12{]} &
  {[}7, 32, 32, 3{]} &
  {[}7, 64, 64, 3{]} &
  - &
  - &
  {[}7, 32, 32, 3{]} &
  \multicolumn{1}{l}{{[}7, 64, 64, 3{]}} \\ \cmidrule(lr){2-2}
 &
  \# parameters &
  - &
  10,257 &
  1,868 &
  1,411 &
  4,867 &
  - &
  - &
  1,411 &
  4,867 \\ \cmidrule(lr){2-2}
 &
  Critic &
  {[}12, 64, 64, 81{]} &
  {[}12, 64, 64, 1{]} &
  {[}12, 64, 64, 1{]} &
  {[}12, 64, 64, 1{]} &
  {[}12, 64, 64, 1{]} &
  {[}7, 64, 64, 3{]} &
  {[}7, 64, 64, 3{]} &
  {[}7, 64, 64, 1{]} &
  {[}7, 64, 64, 1{]} \\ \cmidrule(lr){2-2}
 &
  \# parameters &
  10,257 &
  5,057 &
  5,057 &
  5,057 &
  5,057 &
  4,867 &
  4,867 &
  4,737 &
  5,057 \\ \cmidrule(lr){2-2}
 &
  \# mixer parameters &
  - &
  - &
  - &
  - &
  - &
  - &
  2945 &
  - &
  - \\ \cmidrule(lr){2-2}
 &
  Total \# parameters &
  10,257 &
  15,314 &
  6,925 &
  6,468 &
  9,924 &
  4,867 &
  7,812 &
  6,148 &
  9,924 \\ \midrule
\multirow{9}{*}{\textbf{Learning}} &
  parallel envs &
  1 &
  1 &
  1 &
  1 &
  4 &
  1 &
  1 &
  1 &
  4 \\
 &
  \begin{tabular}[c]{@{}c@{}}\# env steps per \\ learning phase\end{tabular} &
  1 &
  1 &
  1 &
  1 &
  128 &
  1 &
  1 &
  1 &
  128 \\
 &
  batch size &
  64 &
  64 &
  64 &
  64 &
  128x4 = 512 &
  64 &
  64 &
  64 &
  128x4 = 512 \\
 &
  \# minibatches &
  1 &
  1 &
  1 &
  1 &
  4 &
  1 &
  1 &
  1 &
  4 \\
 &
  optimizer &
  \multicolumn{9}{c}{Adam} \\
 &
  Value function coefficient &
  - &
  - &
  - &
  - &
  0.5 &
  - &
  - &
  - &
  0.5 \\
 &
  Entropy coefficient &
  - &
  - &
  - &
  - &
  0.01 &
  - &
  - &
  - &
  0.01 \\
 &
  Max. grad norm &
  - &
  - &
  - &
  - &
  0.5 &
  - &
  - &
  - &
  0.5 \\
 &
  Value loss clipping &
  - &
  - &
  - &
  - &
  True &
  - &
  - &
  - &
  True \\ \midrule
\multirow{4}{*}{\textbf{Actor}} &
  initial lr &
  - &
  1e-4 &
  1e-4 &
  5e-4 &
  2.5e-4 &
  - &
  - &
  5e-4 &
  2.5e-4 \\
 &
  end lr &
  - &
  1e-5 &
  1e-5 &
  1e-5 &
  2.5e-4 &
  - &
  - &
  1e-5 &
  2.5e-4 \\
 &
  \begin{tabular}[c]{@{}c@{}}decay: episodes (E) \end{tabular} &
  - &
  10,000 (E) &
  10,000 (E) &
  20,000 (E) &
  no-decay &
  - &
  - &
  10,000 (E) &
  no-decay \\
 &
  decay type &
  \multicolumn{9}{c}{linear} \\ \midrule
\multirow{4}{*}{\textbf{Critic (and Mixer)}} &
  initial lr &
  1e-3 &
  5e-3 &
  5e-3 &
  1e-3 &
  2.5e-4 &
  1e-3 &
  1e-3 &
  1e-3 &
  2.5e-4 \\
 &
  end lr &
  1e-4 &
  5e-4 &
  5e-4 &
  1e-4 &
  2.5e-4 &
  1e-4 &
  1e-4 &
  1e-4 &
  2.5e-4 \\
 &
  \begin{tabular}[c]{@{}c@{}}decay: episodes (E)\end{tabular} &
  10,000 (E) &
  10,000 (E) &
  10,000 (E) &
  20,000 (E) &
  no-decay &
  10,000 (E) &
  10,000 (E) &
  10,000 (E) &
  no-decay \\
 &
  decay type &
  \multicolumn{9}{c}{linear} \\ \midrule
\multirow{4}{*}{\textbf{$\epsilon$-greedy strategy}} &
  $\epsilon$-start &
  \multicolumn{4}{c}{1} &
  - &
  \multicolumn{3}{c}{1} &
  - \\
 &
  $\epsilon$-end &
  0.01 &
  0.01 &
  0.001 &
  0.01 &
  - &
  0.005 &
  0.005 &
  0.01 &
  - \\
 &
  decay episodes &
  10,000 &
  10,000 &
  20,000 &
  10,000 &
  - &
  10,000 &
  10,000 &
  20,000 &
  - \\
 &
  decay type &
  \multicolumn{4}{c}{linear} &
  - &
  \multicolumn{3}{c}{linear} &
  - \\ \midrule
\textbf{Replay Buffer} &
  timesteps &
  20,000 &
  20,000 &
  10,000 &
  10,000 &
  - &
  20,000 &
  20,000 &
  10,000 &
  - \\ \midrule
\textbf{Target Network Reset (Hard)} &
  episodes &
  100 &
  - &
  - &
  - &
  - &
  100 &
  100 &
  - &
  - \\ \bottomrule
\end{tabular}
}
\caption{Hyperparameters of single- and multi-agent DRL algorithms used across all \(k\)-out-of-\(n\) systems, and the Climb Game. However, in the Climb Game, we train off-policy algorithms for 100,000 episodes.}
\label{tab:hyperparameters}
\end{table}

\clearpage

\subsection{Empirical validation using EPyMARL}
\label{appendix:validation}

\begin{figure}[h!]
    \centering
    \includegraphics[width=1\linewidth]{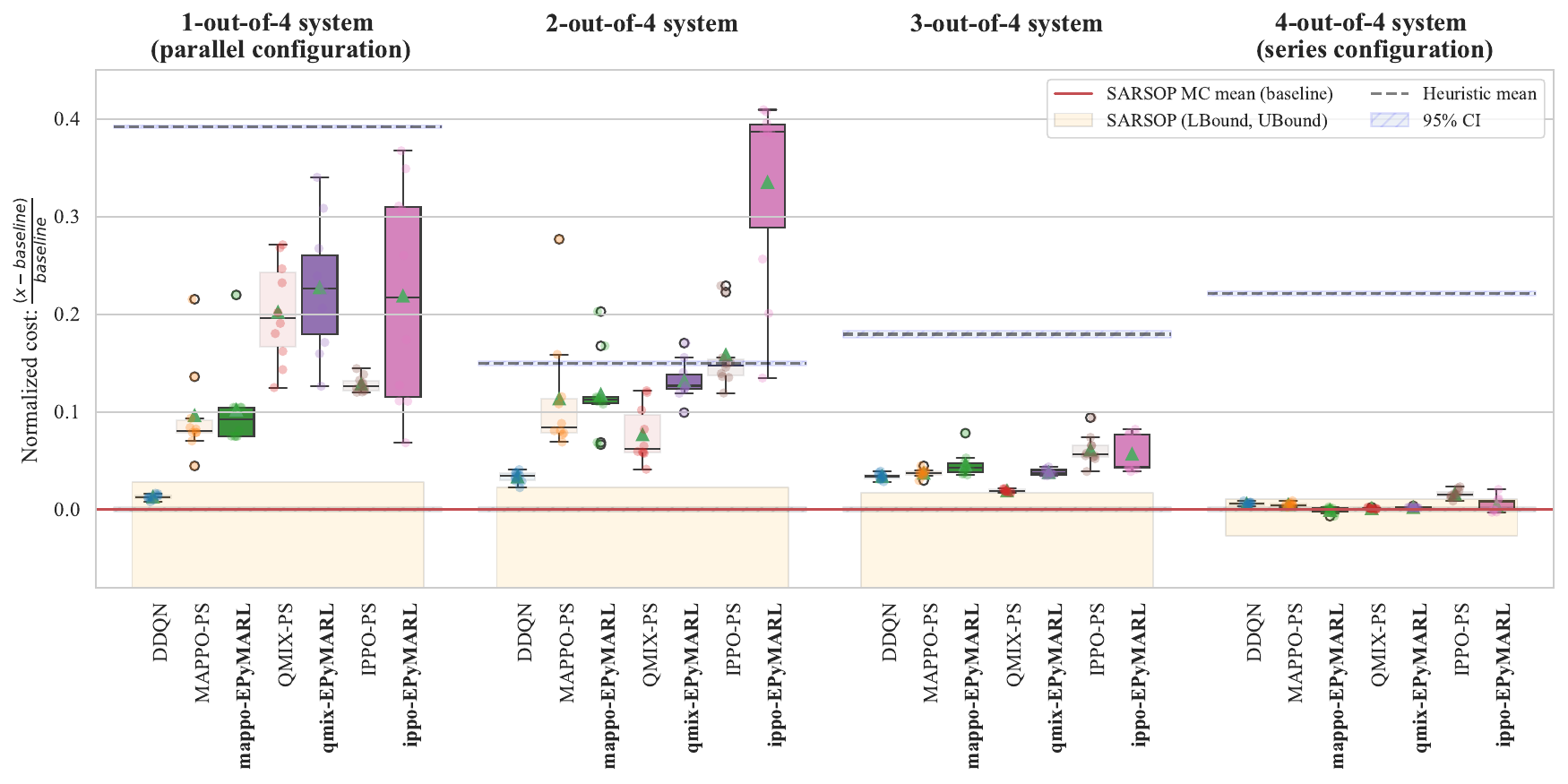}
    \caption{Empirical validation of our MAPPO-PS, QMIX-PS, and IPPO-PS implementations using EPyMARL~\cite{papoudakis_benchmarking_2021} across all \(k\)-out-of-4 systems. Here, we juxtapose the performance of our implementation with the EPyMARL equivalent. Similar to Figure~\ref{fig:kn_infinite_horizon}, we summarize the performance over 10 training runs, with costs normalized using the SARSOP Monte Carlo mean from Table~\ref{tab:best_results_kn_infinitehorizon}. Note that we also used 100{,}000 Monte Carlo rollouts for the EPyMARL evaluations.}
    \label{fig:epymarl_validation}
\end{figure}

To validate our implementation of MADRL algorithms, we compare the performance of a subset of algorithms (QMIX-PS, MAPPO-PS and IPPO-PS) against EPyMARL \cite{papoudakis_benchmarking_2021}, a standard MADRL library in Python. For this empirical validation, we use the default hyperparameters provided by EPyMARL, without additional tuning. Training and evaluation follow the same protocol outlined in Section~\ref{section:experimental_setup}, including the environment configurations, training horizons, number of seeds, and evaluation procedure. In Figure \ref{fig:epymarl_validation}, we observe the same trend as those reported in Figure \ref{fig:kn_infinite_horizon} across all \(k\)-out-of-4 systems. In the series setting, MADRL algorithms are able to obtain (near-)optimal performance, but often fail to cooperate effectively in settings with increasing redundancies.

\clearpage
\subsection{Ablation study}
\label{appendix:ablation_study}

This appendix provides additional ablations  that complement the main experiments. They confirm that the conclusions in the main text are robust to changes in system size and reward specification. Here, we report results for smaller systems with \(n=2\) and \(n=3\), constructed as consistent subsets of the \(n=4\) base system, as well as an alternate reward model with a reduced failure penalty. For details see Section \ref{section:ablations_and_validation}.

\begin{figure}[h!]
    \centering
    \includegraphics[width=1\linewidth]{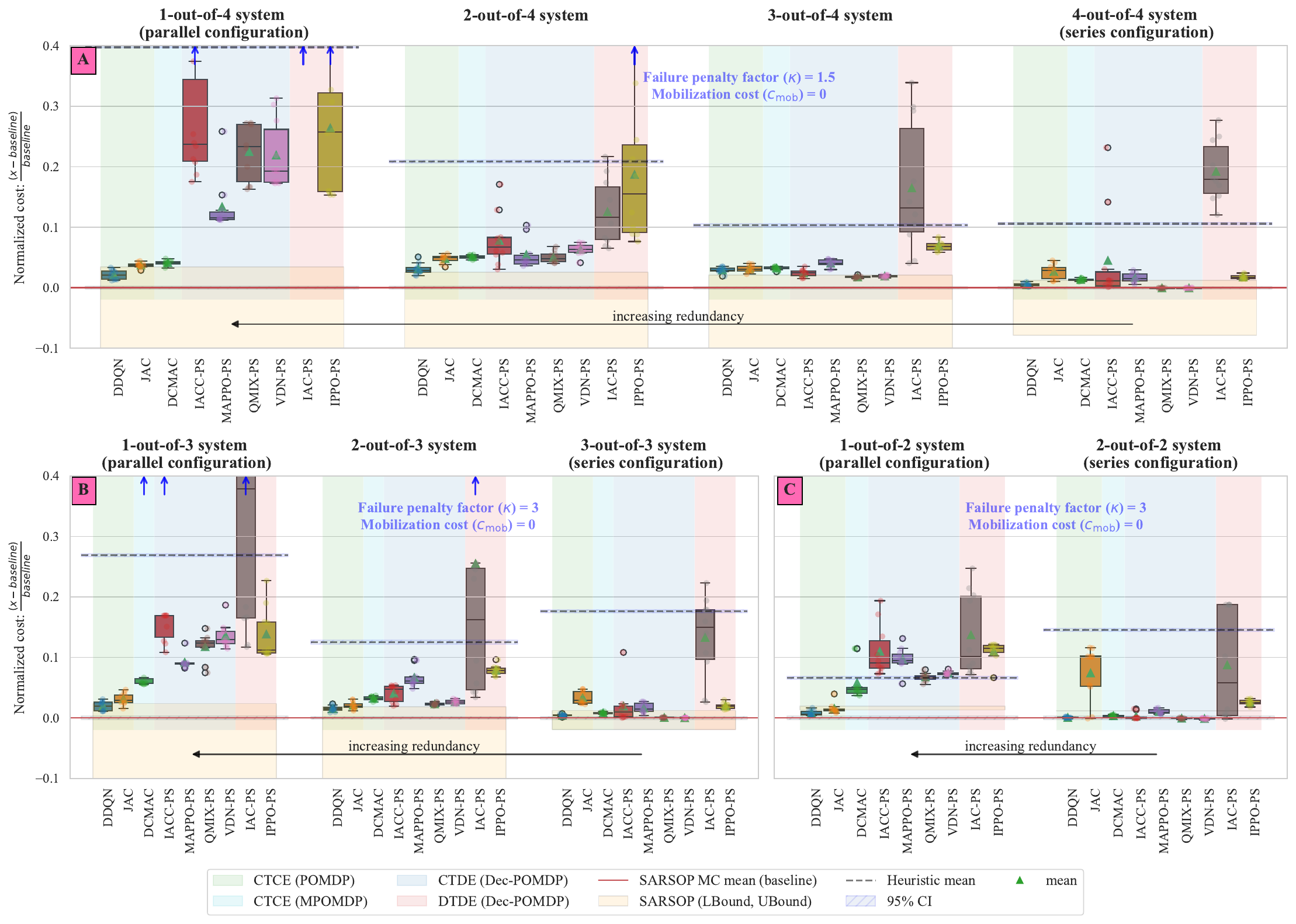}
    \caption{\textbf{Ablation studies}. Performance of MADRL algorithms and heuristic baselines are reported relative to the (near-)optimal SARSOP baseline. The \(n=4\) system with \(\kappa = 3\) and mobilization cost \(=4\) (main text, Figure~\ref{fig:kn_infinite_horizon}) serves as the base case. In the following ablations, only the specified environment parameters are varied, while algorithm hyperparameters and evaluation procedures remain unchanged. Blue arrows ({\color{blue}$\uparrow$}) indicate entries that exceed the plotted \(y\)-axis limits.
    \textbf{Panel A}: We test robustness by ablating the reward model, reducing the failure penalty factor from \(\kappa = 3\) to \(\kappa = 1.5\) and setting mobilization costs to 0, on the \(n=4\) system across all \(k\).  
    \textbf{Panel B}  We ablate the number of components by considering an \(n=3\) system (with \(\kappa = 3\) and mobilization cost 0) across all \(k\). 
    \textbf{Panel C}  We further ablate the number of components by considering an \(n=2\) system (with \(\kappa = 3\) and mobilization cost 0) across both values of \(k\). Results confirm consistent performance trends, showing that our findings are not limited to a specific reward model.}
    \label{fig:ablation_study}
\end{figure}

\clearpage

\section{Value decomposition in a two-agent reliability matrix game}
\label{appendix:value_decomposition_matrix_game}
This appendix shows how the inductive biases of value decomposition methods affect learned policies in redundant systems. To examine this, we consider a single-step two-agent matrix game derived from a two component reliability setting. Using the parallel and series reward matrices in Table~\ref{tab:matrix_game_structures}, we show the greedy joint action induced by VDN and QMIX under the individual global maximization principle.

\newcommand{\act}[1]{\ensuremath{\mathtt{#1}}}
\newcommand{\DN}{\act{DN}}
\newcommand{\Rep}{\act{Rep}.}

\newcommand{\MatrixGameTwoByTwo}[4]{
\begingroup
\setlength{\tabcolsep}{8pt}
\renewcommand{\arraystretch}{1.2}
\begin{tabular}{@{}c c c c@{}}
  &  & \multicolumn{2}{c}{Agent 2} \\
  & \multicolumn{1}{c|}{} & \DN & \Rep \\
  \cline{2-4}
  \multirow{2}{*}{\rotatebox{90}{\makebox[3em][c]{Agent 1}}}
    & \multicolumn{1}{c|}{\DN} & #1 & #2 \\
  & \multicolumn{1}{c|}{\Rep}  & #3 & #4 \\
\end{tabular}
\endgroup
}

\begin{proposition}[Limitations of value factorization under redundancy]
\label{prop:value_factorization_series_parallel_detailed}
Consider a two-agent, single-step matrix game with agents \(m\in\{1,2\}\) and action sets
\(A_m=\{\DN,\Rep\}\), where \(\DN\) denotes do nothing and \(\Rep\) denotes repair. Let
\(r: A_1\times A_2 \to \mathbb{R}\) be the joint reward function, and let \(c_1,c_2>0\) denote the repair costs of agents 1 and 2, respectively. Let \(c_f>0\) denote a failure penalty satisfying \(c_f\gg c_1,c_2\). We assume both components are initially failed and must be repaired to avoid the failure penalty. Assume the reward matrix \(r(a_1,a_2)\) is instantiated as either the parallel or series configuration in Table~\ref{tab:matrix_game_structures}, and that actions are selected according to the individual global maximization (IGM) principle.

\begin{table}[h]
\centering
\renewcommand{\arraystretch}{1.3}
\setlength{\tabcolsep}{10pt}

\begin{tabular}{cc}
\begin{minipage}[t]{0.46\linewidth}
\centering
\MatrixGameTwoByTwo
{$r_{\DN,\DN}$}
{$r_{\DN,\Rep}$}
{$r_{\Rep,\DN}$}
{$r_{\Rep,\Rep}$}

\vspace{1em}

(a) Generic joint reward matrix
\end{minipage}
&
\begin{minipage}[t]{0.46\linewidth}
\centering
\begingroup
\setlength{\tabcolsep}{8pt}
\renewcommand{\arraystretch}{1.2}
\begin{tabular}{@{}c c c c@{}}
  &  & \multicolumn{2}{c}{Agent 2} \\
  & \multicolumn{1}{c|}{} & $Q_2(\DN)$ & $Q_2(\Rep)$ \\
  \cline{2-4}
  \multirow{2}{*}{\rotatebox{90}{\makebox[3em][c]{Agent 1}}}%
    & \multicolumn{1}{c|}{$Q_1(\DN)$}
    & $Q(\DN,\DN)$
    & $Q(\DN,\Rep)$ \\
  & \multicolumn{1}{c|}{$Q_1(\Rep)$}
    & $Q(\Rep,\DN)$
    & $Q(\Rep,\Rep)$ \\
\end{tabular}
\endgroup

\vspace{1em}
(b) Value decomposition form
\end{minipage}
\\[5em]
\begin{minipage}[t]{0.46\linewidth}
\centering
\MatrixGameTwoByTwo
{$-c_f$}{$-c_2$}{$-c_1$}{$-(c_1+c_2)$}

\vspace{1em}
(c) Parallel system
\end{minipage}
&
\begin{minipage}[t]{0.46\linewidth}
\centering
\MatrixGameTwoByTwo
{$-c_f$}{$-(c_f+c_2)$}{$-(c_f+c_1)$}{$-(c_1+c_2)$}

\vspace{1em}
(d) Series system
\end{minipage}
\end{tabular}

\caption{Two-agent matrix game used to analyze value decomposition. (a) shows a generic two-agent reward matrix.  (b) shows its representation under value decomposition, where each agent chooses between doing nothing (\DN) and repairing (\Rep). (c) and (d) specify reward structures for parallel and series systems respectively, parameterized by repair costs \(c_1, c_2\) and a large failure penalty \(c_f\), under the assumption that all components are initially in a failed state.
These cases highlight how redundancy alters optimal coordination and exposes limitations of value factorization methods.}
\label{tab:matrix_game_structures}
\end{table}

(a) Under linear value decomposition (VDN),
\[
Q_{\mathrm{tot}}(a_1,a_2)=Q_1(a_1)+Q_2(a_2),
\]
the induced greedy joint action is \((\Rep,\Rep)\) in both systems.

(b) Under monotonic value factorization (QMIX),
\[
Q_{\mathrm{tot}}(a_1,a_2)=f(Q_1(a_1),Q_2(a_2)),
\qquad
\frac{\partial f}{\partial Q_m}\ge 0,\ m=1,2,
\]
the induced joint action ordering is identical in both systems.
Consequently, both VDN and QMIX align with the optimal coordination structure of series systems, but cannot represent the optimal joint action induced by redundancy in parallel systems.
\end{proposition}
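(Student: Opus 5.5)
We will argue directly from the reward matrices in Table~\ref{tab:matrix_game_structures}, treating the factorized utilities as the natural marginal utilities an agent assigns to its own action when the joint reward is decomposed. The key step is to check that the \emph{ordering} of each agent's individual utilities, \(Q_m(\Rep)\) versus \(Q_m(\DN)\), has the same sign in both configurations. Under IGM, the greedy joint action and (for monotonic \(f\)) the induced joint ordering depend only on these individual orderings.

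\textbf{Series case.} For fixed \(a_2\), we compare the rows of matrix (d). With \(a_2=\DN\), repairing changes the reward from \(-c_f\) to \(-(c_f+c_1)\). With \(a_2=\Rep\), repairing changes it from \(-(c_f+c_2)\) to \(-(c_1+c_2)\). Averaging these effects over the partner's action (equivalently, taking the least-squares VDN fit, whose utilities are the row and column means up to a constant) gives
\[
Q_1(\Rep)-Q_1(\DN)=\tfrac12\big[(c_f-c_1)-c_1\big]>0,
\]
using \(c_f\gg c_1\). The same computation applies symmetrically to agent 2. The greedy joint action is therefore \((\Rep,\Rep)\), which is the true optimum \(-(c_1+c_2)\). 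Hence value decomposition is optimal in the series case.

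\textbf{Parallel case.} In matrix (c), the row means are \(-(c_f+c_2)/2\) for \(\DN\) and \(-(2c_1+c_2)/2\) for \(\Rep\). Their difference is
\[
Q_1(\Rep)-Q_1(\DN)=\tfrac12(c_f-2c_1)>0,
\]
and similarly \(Q_2(\Rep)-Q_2(\DN)>0\). VDN's greedy action is again \((\Rep,\Rep)\), with reward \(-(c_1+c_2)\). This is strictly worse than the optimum \(\max(-c_1,-c_2)\), attained by repairing only the cheaper component. This proves (a) in both systems.

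\textbf{Part (b).} Because \(f\) is monotone nondecreasing in each argument, the ordering of \(Q_{\mathrm{tot}}\) over joint actions is determined by the individual orderings. Both agents prefer \(\Rep\) in both systems, so \((\Rep,\Rep)\) dominates all other joint actions, and \((\DN,\DN)\) is dominated by all others, in both configurations. The induced orderings therefore coincide. The parallel optimum requires \((\Rep,\DN)\) or \((\DN,\Rep)\) to be strictly greater than \((\Rep,\Rep)\). That would require some agent to strictly prefer \(\DN\), which contradicts the individual ordering. The same monotonicity argument shows \((\Rep,\Rep)\) is always weakly maximal.

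\textbf{Main obstacle.} The hard part is making precise which individual utilities the factorization "induces", since IGM alone admits many decompositions. I would fix this by taking the utilities learned under uniform exploration: the least-squares fit for VDN, and the marginal expected rewards as QMIX's per-agent inputs. I would state explicitly that the conclusion concerns this induced ordering. If that choice seems too strong, the fallback is to show directly that no additive or monotone representation can make \((\Rep,\DN)\) strictly maximal while also ranking the matrix entries correctly. This holds because correct ranking requires \(r(\Rep,\Rep)<r(\Rep,\DN)\), so \(Q_2(\DN)>Q_2(\Rep)\), and also \(r(\DN,\DN)<r(\DN,\Rep)\), so \(Q_2(\Rep)>Q_2(\DN)\), which is a contradiction.
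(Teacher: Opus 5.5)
Your proposal is correct under the interpretation you make explicit. Part (a) is the paper's argument in lighter notation. The paper writes out and subtracts the normal equations of the uniform least-squares VDN loss. You use the equivalent fact that the additive least-squares fit of a $2\times 2$ table reproduces row and column means up to a constant. Both give $Q_m(\mathtt{Rep})-Q_m(\mathtt{DN})=c_f/2-c_m>0$ in both systems.

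Part (b) takes a different route. The paper never says which utilities QMIX induces. It argues representability instead: it checks only the argmax in the series case but the full ordering in the parallel case, and its Step 3 treats only the case $Q_1(\mathtt{Rep})>Q_1(\mathtt{DN})$. You fix the per-agent utilities first, so that both agents prefer $\mathtt{Rep}$ in both systems. This actually delivers the ``identical ordering'' claim, which the paper's proof never establishes, and it turns the parallel failure into a one-line contrapositive of monotonicity. Two caveats apply:
\begin{itemize}
\item The ordering you obtain is only partial: $(\mathtt{Rep},\mathtt{Rep})$ is weakly on top and $(\mathtt{DN},\mathtt{DN})$ weakly at the bottom, while the off-diagonal pair is left to $f$.
\item Taking marginal expected rewards as QMIX's inputs is an assumption, not a property of QMIX training. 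IGM alone does not exclude the parallel optimum: an additive decomposition with $Q_1(\mathtt{Rep})>Q_1(\mathtt{DN})$ and $Q_2(\mathtt{DN})>Q_2(\mathtt{Rep})$ makes $(\mathtt{Rep},\mathtt{DN})$ greedy.
\end{itemize}
You can discharge the assumption in the spirit of (a), using uniform squared loss. Suppose agent $m$ weakly prefers $\mathtt{DN}$. Then monotonicity makes $Q_{\mathrm{tot}}$ rank some joint action at least as high as one whose reward is larger by $c_f-c_m$ (with the partner playing $\mathtt{DN}$ in the parallel matrix and $\mathtt{Rep}$ in the series one). This costs at least $(c_f-c_m)^2/2$. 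By contrast, monotone fits with both agents preferring $\mathtt{Rep}$ reach error at most $c_1^2+c_2^2$. So for $c_f\gg c_1,c_2$ the best monotone fit has both agents strictly preferring $\mathtt{Rep}$ in both systems.

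Do not use your fallback, though, because the full-ordering criterion proves too much. The series matrix is non-monotonic as well: $r(\mathtt{DN},\mathtt{DN})>r(\mathtt{Rep},\mathtt{DN})$ forces $Q_1(\mathtt{DN})>Q_1(\mathtt{Rep})$, while $r(\mathtt{Rep},\mathtt{Rep})>r(\mathtt{DN},\mathtt{Rep})$ forces the reverse. So that criterion cannot separate the two systems, and the same objection applies to the paper's Step 3. The series/parallel asymmetry exists only at the level of the greedy action given the induced preferences, which is exactly what your main route uses.
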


\begin{proof}
We prove parts (a) and (b) separately.

\paragraph{(a) \textbf{Linear value decomposition (VDN)}}

Under VDN, the joint action-value function is approximated as
\[
Q_{\mathrm{tot}}(a_1,a_2)=Q_1(a_1)+Q_2(a_2).
\]
Since this is a single-step game, the Bellman target equals the immediate reward, and learning
minimizes the squared residual over all joint actions,
\begin{equation}
\mathcal{L}_{\mathrm{VDN}}
=
\sum_{(a_1,a_2)\in A_1\times A_2}
\big(Q_1(a_1)+Q_2(a_2)-r(a_1,a_2)\big)^2.
\label{eq:vdn_loss}
\end{equation}

Let
\[
q_1^{\DN}:=Q_1(\DN),\quad
q_1^{\Rep}:=Q_1(\Rep),\quad
q_2^{\DN}:=Q_2(\DN),\quad
q_2^{\Rep}:=Q_2(\Rep),
\]
and abbreviate the rewards as
\(
r_{ij}:=r(a_1=i,a_2=j)
\)
for \(i\in\{\DN,\Rep\}\), \(j\in\{\DN,\Rep\}\).
Then \eqref{eq:vdn_loss} expands to
\begin{align}
\mathcal{L}_{\mathrm{VDN}}
&=
\bigl(q_1^{\DN}+q_2^{\DN}-r_{\DN,\DN}\bigr)^2
+\bigl(q_1^{\DN}+q_2^{\Rep}-r_{\DN,\Rep}\bigr)^2
\nonumber\\
&\quad+
\bigl(q_1^{\Rep}+q_2^{\DN}-r_{\Rep,\DN}\bigr)^2
+\bigl(q_1^{\Rep}+q_2^{\Rep}-r_{\Rep,\Rep}\bigr)^2.
\label{eq:vdn_loss_expanded}
\end{align}

\medskip
\noindent\textbf{Step 1: First-order optimality conditions.}
At any minimizer, \(\nabla \mathcal{L}_{\mathrm{VDN}}=0\).
Differentiating \eqref{eq:vdn_loss_expanded} with respect to each variable gives
\begin{align}
\frac{\partial \mathcal{L}_{\mathrm{VDN}}}{\partial q_1^{\DN}}=0
&\;\Longleftrightarrow\;
2q_1^{\DN}+q_2^{\DN}+q_2^{\Rep}
=
r_{\DN,\DN}+r_{\DN,\Rep},
\label{eq:vdn_eq1}
\\
\frac{\partial \mathcal{L}_{\mathrm{VDN}}}{\partial q_1^{\Rep}}=0
&\;\Longleftrightarrow\;
2q_1^{\Rep}+q_2^{\DN}+q_2^{\Rep}
=
r_{\Rep,\DN}+r_{\Rep,\Rep},
\label{eq:vdn_eq2}
\\
\frac{\partial \mathcal{L}_{\mathrm{VDN}}}{\partial q_2^{\DN}}=0
&\;\Longleftrightarrow\;
q_1^{\DN}+q_1^{\Rep}+2q_2^{\DN}
=
r_{\DN,\DN}+r_{\Rep,\DN},
\label{eq:vdn_eq3}
\\
\frac{\partial \mathcal{L}_{\mathrm{VDN}}}{\partial q_2^{\Rep}}=0
&\;\Longleftrightarrow\;
q_1^{\DN}+q_1^{\Rep}+2q_2^{\Rep}
=
r_{\DN,\Rep}+r_{\Rep,\Rep}.
\label{eq:vdn_eq4}
\end{align}

\medskip
\noindent\textbf{Step 2: Eliminate shared terms to isolate action preferences.}
Subtract \eqref{eq:vdn_eq2} from \eqref{eq:vdn_eq1}:
\begin{align}
\bigl(2q_1^{\DN}+q_2^{\DN}+q_2^{\Rep}\bigr)
-
\bigl(2q_1^{\Rep}+q_2^{\DN}+q_2^{\Rep}\bigr)
&=
\bigl(r_{\DN,\DN}+r_{\DN,\Rep}\bigr)
-
\bigl(r_{\Rep,\DN}+r_{\Rep,\Rep}\bigr),
\nonumber\\
2\bigl(q_1^{\DN}-q_1^{\Rep}\bigr)
&=
r_{\DN,\DN}+r_{\DN,\Rep}
-r_{\Rep,\DN}-r_{\Rep,\Rep},
\nonumber\\
q_1^{\DN}-q_1^{\Rep}
&=
\frac{r_{\DN,\DN}+r_{\DN,\Rep}
-r_{\Rep,\DN}-r_{\Rep,\Rep}}{2}.
\label{eq:vdn_gap1_detailed}
\end{align}
Similarly, subtract \eqref{eq:vdn_eq4} from \eqref{eq:vdn_eq3}:
\begin{align}
\bigl(q_1^{\DN}+q_1^{\Rep}+2q_2^{\DN}\bigr)
-
\bigl(q_1^{\DN}+q_1^{\Rep}+2q_2^{\Rep}\bigr)
&=
\bigl(r_{\DN,\DN}+r_{\Rep,\DN}\bigr)
-
\bigl(r_{\DN,\Rep}+r_{\Rep,\Rep}\bigr),
\nonumber\\
2\bigl(q_2^{\DN}-q_2^{\Rep}\bigr)
&=
r_{\DN,\DN}+r_{\Rep,\DN}
-r_{\DN,\Rep}-r_{\Rep,\Rep},
\nonumber\\
q_2^{\DN}-q_2^{\Rep}
&=
\frac{r_{\DN,\DN}+r_{\Rep,\DN}
-r_{\DN,\Rep}-r_{\Rep,\Rep}}{2}.
\label{eq:vdn_gap2_detailed}
\end{align}

\medskip
\noindent\textbf{Steps 3: Parallel and series systems.}
We now evaluate the expressions
\eqref{eq:vdn_gap1_detailed}–\eqref{eq:vdn_gap2_detailed}
for both the parallel and series reward structures.
Observe first that in both systems the failure penalty satisfies
\[
r_{\DN,\DN}=-c_f,
\qquad
r_{\Rep,\Rep}=-(c_1+c_2),
\]
and, crucially, the \emph{difference between asymmetric joint rewards} is identical:
\begin{equation}
r_{\Rep,\DN}-r_{\DN,\Rep}
=
\begin{cases}
(-c_1)-(-c_2) = c_2-c_1, & \text{(parallel)},\\[4pt]
\bigl(-(c_f+c_1)\bigr)-\bigl(-(c_f+c_2)\bigr)=c_2-c_1, & \text{(series)}.
\end{cases}
\label{eq:vdn_asymmetry_invariant}
\end{equation}
As a consequence, the right-hand sides of
\eqref{eq:vdn_gap1_detailed}–\eqref{eq:vdn_gap2_detailed}
coincide for the two systems. Substituting either reward structure yields
\begin{align}
q_1^{\DN}-q_1^{\Rep}
&=
\frac{r_{\DN,\DN}+r_{\DN,\Rep}
-r_{\Rep,\DN}-r_{\Rep,\Rep}}{2}
=
c_1-\frac{c_f}{2},
\label{eq:vdn_gap1_final}
\\[4pt]
q_2^{\DN}-q_2^{\Rep}
&=
\frac{r_{\DN,\DN}+r_{\Rep,\DN}
-r_{\DN,\Rep}-r_{\Rep,\Rep}}{2}
=
c_2-\frac{c_f}{2}.
\label{eq:vdn_gap2_final}
\end{align}
Equivalently,
\[
q_1^{\Rep}-q_1^{\DN}=\frac{c_f}{2}-c_1>0,
\qquad
q_2^{\Rep}-q_2^{\DN}=\frac{c_f}{2}-c_2>0,
\]
where the inequalities follow from the assumption \(c_f\gg c_1,c_2\).
Thus, under the IGM principle, both agents strictly prefer action
\(\Rep\), and the greedy joint action is \((\Rep,\Rep)\)
in \emph{both} the parallel and series systems. This joint action is optimal in the series system but suboptimal in the
parallel system, where repairing only one component avoids failure.

\paragraph{(b) \textbf{Monotonic value factorization (QMIX)}}

QMIX assumes that the joint action--value function factorizes as
\begin{equation}
Q_{\mathrm{tot}}(a_1,a_2)
=
f(Q_1(a_1),Q_2(a_2)),
\qquad
\frac{\partial f}{\partial Q_m}\ge 0,\ m=1,2,
\label{eq:qmix_form}
\end{equation}
so that $Q_{\mathrm{tot}}$ is monotone in each agent’s individual utility.

\medskip
\noindent\textbf{Step 1: Consequence of monotonicity.}
Fix any action $a_2\in A_2$ of agent~2.  
If agent~1 prefers action $a_1$ over $a_1'$, i.e.,
\begin{equation}
Q_1(a_1)\ge Q_1(a_1'),
\label{eq:q1_order}
\end{equation}
then monotonicity of $f$ implies
\begin{equation}
Q_{\mathrm{tot}}(a_1,a_2)
=
f(Q_1(a_1),Q_2(a_2))
\ge
f(Q_1(a_1'),Q_2(a_2))
=
Q_{\mathrm{tot}}(a_1',a_2).
\label{eq:qmix_row_monotonicity}
\end{equation}
Since $a_2$ was arbitrary, the ordering induced by $Q_1$ over
$\{\DN,\Rep\}$ must hold \emph{for every action of agent~2}.
An analogous argument applies symmetrically to agent~2.

\emph{Interpretation:} under QMIX, each agent’s local action ranking must be
globally consistent across all joint actions.

\medskip
\noindent\textbf{Step 2: Series system.}
In a series system, system failure is avoided only if \emph{both} components
are repaired. Hence the optimal joint action is $(\Rep,\Rep)$,
which requires
\begin{equation}
Q_1(\Rep)>Q_1(\DN),
\qquad
Q_2(\Rep)>Q_2(\DN).
\label{eq:qmix_series_pref}
\end{equation}
Substituting \eqref{eq:qmix_series_pref} into
\eqref{eq:qmix_row_monotonicity} implies
\[
Q_{\mathrm{tot}}(\Rep,\Rep)
\ge
Q_{\mathrm{tot}}(\DN,\Rep),
\qquad
Q_{\mathrm{tot}}(\Rep,\Rep)
\ge
Q_{\mathrm{tot}}(\Rep,\DN),
\]
and therefore
\[
Q_{\mathrm{tot}}(\Rep,\Rep)
\ge
Q_{\mathrm{tot}}(a_1,a_2)
\quad\text{for all }(a_1,a_2)\neq(\Rep,\Rep).
\]
Thus, QMIX can represent the optimal policy in the series system.

\medskip
\noindent\textbf{Step 3: Parallel system.}
In a parallel system, repairing both components is unnecessary and suboptimal;
the optimal joint action repairs \emph{only the cheaper component}, i.e.,
either $(\Rep,\DN)$ or $(\DN,\Rep)$.
However, monotonicity enforces a fixed ordering:
if $Q_1(\Rep)>Q_1(\DN)$, then $(\Rep,\cdot)$ must be ranked
above $(\DN,\cdot)$ for \emph{all} actions of agent~2.
Consequently, $(\Rep,\Rep)$ must be ranked above
$(\DN,\Rep)$, contradicting the optimal ordering in the parallel
system.
Therefore, monotonic value factorization aligns naturally with the series system rewards but cannot
represent the optimal joint action induced by redundancy in a parallel system.
\end{proof}

\end{document}